%
 \documentclass[11pt]{article}%
\usepackage{graphicx}
\usepackage[ ruled, boxed, linesnumbered]{algorithm2e}
\usepackage{setspace}
\usepackage{amssymb,amsfonts}
\usepackage[margin = 1in]{geometry}
\usepackage{microtype}

\usepackage{url}
\usepackage{color,microtype}
\usepackage{graphicx}
\usepackage{amssymb,amsfonts,amsmath,amsthm}
\usepackage{textcomp}
\usepackage{multirow}
\usepackage{array}
\usepackage{xspace}


\DeclareMathOperator{\polylog}{polylog}
\DeclareMathOperator{\poly}{poly}

\usepackage[sort]{cite}

\setlength{\intextsep}{2pt}

\newcommand{\C}{{\mathcal{C}}}

\newcommand{\ind}{\textup{Index}\xspace}

\newcommand{\tO}{\tilde{O}}

\newcommand{\sat}{\textsf{SAT}\xspace}
\newcommand{\ksat}{\textup{$k$-SAT}\xspace}
\newcommand{\maxsat}{\textup{Max-SAT}\xspace}
\newcommand{\minsat}{\textup{Min-SAT}\xspace}
\newcommand{\minksat}{\textup{Min-$k$-SAT}\xspace}
\newcommand{\maxksat}{\textup{Max-$k$-SAT}\xspace}
\newcommand{\maxtwosat}{\textup{Max-2-SAT}\xspace}
\newcommand{\maxandsat}{\textup{Max-AND-SAT}\xspace}

\newcommand{\compl}[1]{\overline{#1}}

\DeclareMathOperator{\opt}{\textup{OPT}}

\DeclareMathOperator{\true}{\textup{true}}
\DeclareMathOperator{\false}{\textup{false}}

\renewcommand{\epsilon}{\varepsilon}

\renewcommand{\exp}[1]{{\textup{exp}}\left( #1 \right)}

\newcommand{\expec}[1]{ \mathbb{E}\left [ #1 \right ]}

\newtheorem{theorem}{Theorem}

\newtheorem{lemma}[theorem]{Lemma}

\newtheorem{claim}[theorem]{Claim}

\newcommand{\prob}[1]{ \mathbb{P} \left( #1 \right)}

%

\begin{document}
\title{Revisiting Maximum Satisfiability and Related Problems in Data Streams \footnote{An extended abstract of this paper will appear in the  28th International Computing and Combinatorics Conference  (COCOON 2022). }}
%
%
\author{Hoa T. Vu \footnote{San Diego State University, San Diego, CA, USA. Email: hvu2@sdsu.edu.}}
\date{}
%
\maketitle              
\begin{abstract}

We revisit the maximum satisfiability problem (\maxsat)  in the data stream model. In this problem, the stream consists of $m$ clauses that are disjunctions of literals drawn from $n$ Boolean variables. The objective is to find an  assignment to the variables that maximizes the number of satisfied clauses. Chou et al. (FOCS 2020) showed that $\Omega(\sqrt{n})$ space is necessary to yield a $\sqrt{2}/2+\epsilon$ approximation of the optimum value; they also presented an algorithm that yields a $\sqrt{2}/2-\epsilon$ approximation of the optimum value using $O(\epsilon^{-2}\log n)$ space.

In this paper, we focus not only on approximating the optimum value, but also on obtaining the corresponding Boolean assignment using sublinear $o(mn)$ space.  We present randomized single-pass algorithms that w.h.p. \footnote{W.h.p. denotes ``with high probability''. Here, we consider $1-1/\poly(n)$ or $1-1/\poly(m)$  as high probability.} yield:
\begin{itemize}
\item A $1-\epsilon$ approximation using $\tilde{O}(n/\epsilon^3)$ space and exponential post-processing time  
\item A $3/4-\epsilon$ approximation using $\tilde{O}(n/\epsilon)$ space and polynomial post-processing time. \end{itemize}
Our ideas also extend to dynamic streams. On the other hand, we show that the streaming $\ksat$ problem that asks to decide whether one can satisfy all size-$k$ input clauses must use $\Omega(n^k)$ space. 

We also consider the related minimum satisfiability problem ($\minsat$), introduced by Kohli et al. (SIAM J. Discrete Math. 1994), that asks to find an assignment that minimizes the number of satisfied clauses. For this problem, we give a $\tO(n^2/\epsilon^2)$ space algorithm, which is sublinear when $m = \omega(n)$, that yields an $\alpha+\epsilon$ approximation where $\alpha$ is the approximation guarantee of the offline algorithm. If each variable appears in at most $f$ clauses, we show that a $2\sqrt{fn}$ approximation using $\tO(n)$ space is possible.

Finally, for the \maxandsat problem where clauses are conjunctions of literals, we show that any single-pass  algorithm that approximates the optimal value up to a factor better than 1/2 with success probability at least $2/3$ must use $\Omega(mn)$ space.

\end{abstract}
 
 \newpage

\section{Introduction}

\paragraph{Problems overview.} The Boolean satisfiability problem (\sat) is one of the most famous problems in computer science. A satisfiability instance is a conjunction of  $m$ clauses $C_1 \land C_2 \land \ldots \land C_m$ where each clause $C_j$ is a disjunction of literals drawn from a set of $n$ Boolean variables $x_1,\ldots,x_n$ (a literal is either a variable or its negation). Deciding whether such an expression is satisfiable is \textup{NP-Complete} \cite{Cook71,Trakhtenbrot84}. When each clause has size exactly $k$, this is known as the $\ksat$ problem.

In the optimization version, one aims to find an assignment to the variables that maximizes the number of satisfied clauses. This is known as the maximum satisfiability problem (\maxsat). This problem is still \textup{NP-Hard} even when each clause has at most two literals \cite{GareyJS76}.  However, \maxsat can be approximated up to a factor $3/4$ using linear programming (LP) \cite{GW94}, network flow \cite{Y94}, or a careful greedy approach \cite{PSWZ17}. In polynomial time, one can also obtain an approximation slightly better than 3/4 using semidefinite programming (SDP) \cite{GW95,ABZ05}. Hastad showed the inapproximability result that unless $\textup{P}=\textup{NP}$, there is no polynomial-time algorithm that yields an approximation better than 21/22 to \maxtwosat \cite{Hastad01}.

A related problem is the minimum satisfiability problem ($\minsat$) which was introduced by Kohli et al. \cite{KKM94}. In this problem, the goal is to minimize the number of satisfied clauses. They showed that this problem is \textup{NP-Hard} and gave a simple randomized 2-approximation. Marathe and Ravi \cite{MaratheR96} showed that \minsat is equivalent to the minimum vertex cover problem and therefore an approximation factor better than 2 in polynomial time is unlikely. Better approximations for $\minksat$ for small values of $k$ have also been developed by Avidor and Zwick \cite{AvidorZ02}, Bertsimas et al. \cite{BertsimasTV99}, and Arif et al. \cite{ArifBGK20} using linear and semidefinite programming.

In this paper, we also consider another related optimization problem \maxandsat. This problem is similar to \maxsat except that each clause is a conjunction of literals (as opposed to being a disjunction of literals in \maxsat). Trevisan studied this problem in the guise of parallel algorithms \cite{T98}. We aim to understand the space complexity of \maxsat, \minsat, \ksat, and \maxandsat in the streaming model. 

\paragraph{The data stream model.} In this setting, clauses are presented one by one in the stream in an online fashion and  the objective is to use sublinear space $o(mn)$ while obtaining a guaranteed non-trivial approximation. 

\paragraph{Motivation and past work.} Constraint satisfaction problems and their optimization counterparts have recently received notable attention in the data stream model. Some examples include vertex coloring \cite{BeraCG20,AssadiCK19}, \textup{Max-2-AND} \cite{GVV17}, \textup{Max-Boolean-CSPs} and  \textup{Max-$k$-SAT}  \cite{ChouGV20, chou2021approximability}, \textup{Min-Ones $d$-SAT} \cite{AgrawalBBBCMM0019}, and \textup{Max-2-XOR} \cite{KK19}. 

In terms of applications, \sat, \maxsat, and \minsat have been used in model-checking, software package management, design debugging, AI planning, bioinformatics, combinatorial auctions, etc.  \cite{JSS00,ABLSR10,CSMV10,AM07,HPJED17,C08,LM06,JMJI15, LiZMS11, M08, SZGN17}. Many of these applications have inputs that are too large to fit in a single machine's main memory. Furthermore, in many applications, we need to run multiple instances of \maxsat and hence saving memory could be valuable. 

Examples of large \maxsat benchmarks that  arise from real-world applications can be found at \cite{MaxSATbenchmark}. This motivates us to study this problem in the streaming setting that aims to use sublinear memory. 

\maxsat and \maxandsat were also studied by Trevisan \cite{T98} in the guise of parallel algorithms. Trevisan showed that there is a parallel algorithm that finds a $3/4-\epsilon$ approximation to  \maxsat in $O(\poly(1/\epsilon,\log m))$ time using $O(n+m)$ processors \cite{T98}. Our results here show that it suffices to use $O(n)$ processors.

The most relevant result is by Chou et al. \cite{ChouGV20}. They showed that $\Omega(\sqrt{n})$ space is required to yield a $\sqrt{2}/2+\epsilon$ approximation of the optimum value of \maxksat for $k\geq 2$; they also presented an algorithm that yields a $\sqrt{2}/2-\epsilon$ approximation of the optimum value of \maxsat using $O(\epsilon^{-2}\log n)$ space.

In many cases, we want to not only approximate the optimum value, but also output the corresponding Boolean assignment which is an objective of this work. It is worth noting that storing the assignment itself requires $\Omega(n)$ space. While our algorithms use more space, this allows us to actually output the assignment and to obtain a better approximation.

To the best of our knowledge, unlike \maxsat,  there is no prior work on $\sat,\minsat$, and $\maxandsat$ in the streaming model.

\paragraph{Main results.} Hereinafter, the memory use is measured in terms of bits. All randomized algorithms succeed w.h.p. For \maxsat, we show that it is possible to obtain a non-trivial approximation using only $\tO(n)$ space \footnote{$\tilde{O}$ hides $\polylog$ factors.}   which is roughly the space needed to store the output assignment. Throughout this paper, algorithms actually output an  assignment to the variables along with an estimate for the number of satisfied clauses. In this paper, we solely focus on algorithms that use a single pass over the stream. Furthermore, unless stated otherwise, we assume insertion-only streams.

The algorithms for \maxsat rely on two simple observations. If $m =\omega(n/\epsilon^2)$ and we sample $\Theta(n/\epsilon^2)$ clauses uniformly at random, then w.h.p. an $\alpha$ approximation on the sampled clauses corresponds to an $\alpha-\epsilon$ approximation on the original input. Moreover, if a clause is large, it can be satisfied w.h.p. as long as each literal is set to $\true$ independently with a not-too-small probability and therefore we may ignore such clauses. This second observation also allows us to extend our result to insertion-deletion (dynamic) streams.

Based on the above observations, we proceed by simply ignoring large clauses. Then, among the remaining (small) clauses, we sample $\Theta(n/\epsilon^2)$ clauses uniformly at random, denoted by $W$. Finally, we run some randomized $\alpha$ approximation algorithm on $W$ in post-processing in which every literal is set to $\true$ with some small probability. This will lead to an $\alpha-\epsilon$ approximation on the original set of clauses w.h.p.

\paragraph{No-duplicate assumption.}There is a subtlety regarding duplicate clauses, especially for dynamic streams.  Suppose two (or more) similar clauses (i.e., duplicates) appear in the stream, would we consider them as one clause or two separate clauses (or equivalently, one clause with weight 2)? This boils down to the choice of using an $L_0$ sampler or an $L_1$ sampler. That is whether one samples a clause uniformly at random as long as it appears in the stream or based on its frequency. However, to facilitate our discussion, hereinafter, we assume that there is no duplicate in the stream. 

Our first main results are algorithms for $\maxsat$ that use space linear in terms of $n$. Note that the space to store the output assignment itself is $\Omega(n)$.

\begin{theorem}\label{thm:max-sat} 
We have the following randomized streaming algorithms for $\maxsat$.
\begin{itemize}
\item  A $3/4-\epsilon$ and a $1-\epsilon$ approximations for insertion-only streams while using $\tO(n/\epsilon)$ and $\tO(n/\epsilon^3)$ space respectively. These algorithms have $O(1)$ update time. \item A $3/4-\epsilon$ and a $1-\epsilon$ approximations for  dynamic streams while using $\tO(n/\epsilon)$ and $\tO(n/\epsilon^4)$ space respectively. The update time can be made $\tO(1)$ with an additional $\epsilon^{-1}\log n$  factor in the space use. 
\end{itemize}
\end{theorem}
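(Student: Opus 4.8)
The plan is to reduce every case to running a known \emph{offline} Max-SAT approximation on a small stored sub-instance, while controlling the error introduced by (a) discarding long clauses and (b) subsampling the short ones. Fix a length threshold $t=\Theta(\epsilon^{-1}\log m)$; call a clause \emph{long} if it has more than $t$ literals and \emph{short} otherwise, and split the input $\mathcal C$ into $\mathcal L$ (long) and $\mathcal S$ (short), writing $m_{\mathcal S}=|\mathcal S|$ and $\opt_{\mathcal S}$ for the optimum on $\mathcal S$. Two elementary facts drive the argument. First, a uniformly random assignment satisfies every clause with probability at least $1/2$, so $\opt\ge m/2$ and $\opt_{\mathcal S}\ge m_{\mathcal S}/2$; hence an \emph{additive} error of $O(\epsilon m)$ or $O(\epsilon m_{\mathcal S})$ is only an $O(\epsilon)$-multiplicative error against $\opt$. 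Second, if an assignment makes every literal true independently with probability at least $\delta$, then a clause with more than $t\ge c\delta^{-1}\log m$ literals is satisfied with probability $1-1/\poly(m)$; we take $\delta=\Theta(\epsilon)$, so after a union bound \emph{every} long clause is satisfied w.h.p.\ ``for free''.

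Concretely, the streaming algorithm keeps counters for $m$, $m_{\mathcal S}$, $|\mathcal L|$, and a uniform sample $W$ of the short clauses with $|W|=\Theta(n/\epsilon^2)$ via reservoir sampling (insertion-only; the inclusion test is $O(1)$ and writing a sampled clause is charged to reading it, so update time is $O(1)$). In post-processing we (i) run an offline $\alpha$-approximation for $\maxsat$ on $W$ to get an assignment $y$ (repeating $O(\log n)$ times and keeping the best on $W$ if the offline algorithm is randomized), (ii) form $x$ by re-randomizing each variable independently with probability $2\delta$ and keeping $y_i$ otherwise, so every literal is true under $x$ with probability at least $\delta$, and (iii) output $x$ together with the estimate $\tilde v=(m_{\mathcal S}/|W|)\,s_W(x)+|\mathcal L|$, where $s_W(x)$ is the number of clauses of $W$ satisfied by $x$. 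The correctness chain: a Chernoff bound plus a union bound over all $2^n$ assignments shows that w.h.p., simultaneously for every assignment, the fraction of $W$ satisfied is within $\epsilon$ of the fraction of $\mathcal S$ satisfied, so $y$ satisfies at least $\alpha\,\opt_{\mathcal S}-O(\epsilon m_{\mathcal S})$ clauses of $\mathcal S$; the re-randomization kills a satisfied short clause with probability at most $\delta$ (one of its true literals must flip), costing at most $\delta m_{\mathcal S}$ short clauses in expectation, while making every long clause satisfied w.h.p.; hence $x$ satisfies at least $\alpha\,\opt_{\mathcal S}-O(\epsilon)m_{\mathcal S}+|\mathcal L|\ge(\alpha-O(\epsilon))(\opt_{\mathcal S}+|\mathcal L|)\ge(\alpha-O(\epsilon))\opt$, using $\opt\le\opt_{\mathcal S}+|\mathcal L|$ and $m_{\mathcal S}\le 2\opt_{\mathcal S}$; the same sampling guarantee gives $\tilde v\in[(1-O(\epsilon))\opt,(1+O(\epsilon))\opt]$. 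Rescaling $\epsilon$ yields the ratio. For the $1-\epsilon$ result the offline step is brute force over all $2^n$ assignments (exponential time, $\alpha=1$); for the $3/4-\epsilon$ result it is the LP-rounding algorithm of \cite{GW94} (or the flow/greedy algorithms of \cite{Y94,PSWZ17}), which is polynomial time with $\alpha=3/4$.

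For the space: $W$ stores $\Theta(n/\epsilon^2)$ clauses of at most $t=\tilde\Theta(1/\epsilon)$ literals each, i.e.\ $\tO(n/\epsilon^3)$ bits, which is the bound claimed for the $1-\epsilon$ algorithm. To push the $3/4-\epsilon$ algorithm down to $\tO(n/\epsilon)$ I would (a) use a rounding function for \cite{GW94} whose value always lies in $[1/4,3/4]$, so the rounded assignment already makes every literal true with constant probability and \emph{no} re-randomization is needed, which lets the long-clause threshold drop to $t=O(\log(1/\epsilon))$ while still satisfying all but an $\epsilon$-fraction of the long clauses, and (b) since we only target a constant ratio we can absorb an additive $\Theta(\epsilon)\opt$ error, so it suffices to keep a coarser summary of the short clauses (a sample of $\tO(n/\epsilon)$ of them plus $2n$ occurrence counters that handle the shortest clauses, where sampling error would be most costly, exactly), for $\tO(n/\epsilon)$ bits total. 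For dynamic streams, reservoir sampling is replaced by $\Theta(n/\epsilon^2)$ parallel $\ell_0$-samplers drawing uniformly from the set of currently-present short clauses (keyed by clause content, unambiguous under the no-duplicate assumption); since the sampled universe of short clauses has size $n^{O(t)}$, each sampler costs $\poly(t,\log n)=\tilde\Theta(1/\epsilon)$ extra bits, so the $1-\epsilon$ bound becomes $\tO(n/\epsilon^4)$ while the $3/4-\epsilon$ bound is unchanged up to $\polylog$ factors; the $\tO(1)$-update-time variant distributes stream updates over the samplers by hashing at the cost of another $\epsilon^{-1}\log n$ factor in space.

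The step I expect to be the real work is the space accounting for the $3/4-\epsilon$ algorithm --- showing that a constant-factor offline algorithm can be fed a summary small enough to give $\tO(n/\epsilon)$ rather than the $\tO(n/\epsilon^2)$ or $\tO(n/\epsilon^3)$ that the plain ``sample $\Theta(n/\epsilon^2)$ short clauses and store them in full'' strategy yields --- together with the dynamic implementation, where one must check that long clauses can still be dropped safely under deletions (so the $\ell_0$-samplers only ever represent short clauses) and that restricting each $\ell_0$-sampler to short clauses does not inflate its failure probability beyond what the union bound over $2^n$ assignments can tolerate. The ``drop long clauses'' observation is what makes any of this sublinear: without it a single stored sampled clause could already cost $\Theta(n)$ bits, and a deleted-then-reinserted length-$n$ clause would defeat $\ell_0$-sampling outright.
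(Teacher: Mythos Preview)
Your overall framework---drop long clauses, sample the short ones, run an offline $\alpha$-approximation, then perturb so every literal is true with some minimum probability---is exactly what the paper does, and your treatment of the $1-\epsilon$ case (both insertion-only and dynamic) is correct and essentially identical to the paper's Lemmas on sampling, large clauses, and the perturbed exact post-processing.

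The genuine gap is the $\tO(n/\epsilon)$ bound for the $3/4-\epsilon$ algorithm. Your step (b), ``keep a sample of $\tO(n/\epsilon)$ short clauses plus $2n$ occurrence counters'', does not work: the sampling lemma needs $\Theta(n/\epsilon^2)$ samples precisely because the Chernoff exponent must beat a union bound over all $2^n$ assignments; with only $\tO(n/\epsilon)$ samples the additive error you can certify uniformly over assignments is $\Theta(\sqrt{\epsilon})$, not $\Theta(\epsilon)$. Tracking singleton clauses exactly via $2n$ counters removes those from the sample, but the remaining size-$\ge 2$ clauses still number up to $m$, and the same $\Theta(n/\epsilon^2)$ sample size is required for them. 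Your step (a) (shrinking the long-clause threshold to $O(\log(1/\epsilon))$) is a red herring for the same reason: it reduces the bits per stored clause but does nothing to reduce how many clauses you must store.

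The paper's route to $\tO(n/\epsilon)$ is different and avoids the union-bound bottleneck altogether via a case split on the number of singleton clauses. If at most $\epsilon m$ clauses are singletons, discard them (an additive $O(\epsilon\,\opt)$ loss), and observe that a \emph{uniformly random} assignment already satisfies every remaining clause with probability $\ge 1-2^{-2}=3/4$; so no LP and no sampling are needed---fix $O(\epsilon^{-1}\log m)$ random assignments in advance, count satisfied clauses for each during the stream, and output the best, using $\tO(n/\epsilon)$ bits. If more than $\epsilon m$ clauses are singletons, then under the no-duplicate assumption there are at most $2n$ singletons, hence $m\le 2n/\epsilon$; now you can store \emph{all} clauses of size at most $K\log m$ explicitly in $\tO(m)=\tO(n/\epsilon)$ bits and run the LP rounding on them. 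This same case split carries over to dynamic streams. Your acknowledgement that this is ``the step I expect to be the real work'' is accurate; the missing idea is this dichotomy rather than a coarser sample.
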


The decision problem \sat is however much harder in terms of streaming space complexity. Specifically, we show that $o(m)$ space is generally not possible. Our lower bound holds even for the decision \ksat problem where each clause has exactly $k$ literals and $m= \Theta((n/k)^k)$.

\begin{theorem}\label{thm:lb-ksat}
Suppose $k \leq  n/e$. Any single-pass streaming algorithm that solves $\ksat$ with success probability at least 3/4 requires $\Omega(m)$ space where $m = \Theta((n/k)^k)$ w.h.p. 
\end{theorem}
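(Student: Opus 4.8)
The plan is to reduce from a standard communication complexity problem—most naturally, a multi-party version of set-disjointness or an indexing-type problem—where $\Omega(m)$ communication is required, and to encode the instance as a $\ksat$ formula in such a way that a low-space streaming algorithm would yield a low-communication protocol. The key observation is that a $\ksat$ formula over $n$ variables can have as many as $\binom{n}{k}2^k = \Theta((n/k)^k)$ distinct clauses (for $k \le n/e$, using $\binom{n}{k} \ge (n/k)^k$), so there are $m = \Theta((n/k)^k)$ possible clauses, and specifying an arbitrary subset of them is $m$ bits of information. The point is that deciding satisfiability of a carefully chosen subset lets us recover that information.

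First I would partition the universe of all $m$ possible size-$k$ clauses into blocks and set up a one-way two-party (Alice--Bob) reduction from \ind (the Index problem) on a universe of size $\Theta(m)$, which has one-way randomized communication complexity $\Omega(m)$. Alice holds a subset $S$ of the possible clauses (her indicator string of length $\Theta(m)$); she streams exactly the clauses in $S$ and passes the memory state to Bob. Bob holds an index $i$ and wants to learn whether clause $C_i \in S$. The technical heart is a \emph{gadget}: I want to design, for each target clause $C_i$, a short ``query'' set of additional clauses $Q_i$ that Bob appends to the stream so that the combined formula $S \cup Q_i$ is satisfiable if and only if $C_i \notin S$ (or the reverse). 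A clean way to do this is to reserve a block of fresh auxiliary variables for $C_i$ so that $Q_i$ together forces the unique assignment falsifying $C_i$ on its variables while being independently satisfiable elsewhere; then $S \cup Q_i$ is unsatisfiable exactly when $C_i$ was present. One must be careful to keep $Q_i$'s clauses of size exactly $k$ (pad with the auxiliary literals) and to keep the total variable count at $\Theta(n)$, so that $m = \Theta((n/k)^k)$ is preserved; this is a routine but slightly fiddly bookkeeping step.

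Given such a gadget, a streaming algorithm solving $\ksat$ with probability $\ge 3/4$ on the stream $S$ followed by $Q_i$ gives a one-way protocol: Alice runs the algorithm on $S$, sends the state (its size many bits), Bob continues on $Q_i$ and reads off the answer, thereby solving \ind. Since \ind requires $\Omega(m)$ bits even with constant error, the memory state—hence the streaming space—must be $\Omega(m)$. The constant $3/4$ in the theorem matches the standard constant-error regime for the \ind lower bound (one can also boost via a direct-sum / augmented-indexing argument if a sharper constant is wanted, but that is not needed here).

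The step I expect to be the main obstacle is the gadget design: making ``$C_i$ present $\iff$ unsatisfiable'' work with \emph{exact} size-$k$ clauses, with only $O(1)$ (or at most $O(\poly(k))$) extra clauses and $O(n)$ extra variables, without the auxiliary clauses interacting across different queries or with the clauses of $S$ in unintended ways. The natural fix is to give each potential query its own private block of $O(k)$ auxiliary variables (there are $m$ potential queries but only one is ever used in the reduction, so we only need the block that Bob actually uses—alternatively, reuse a single shared block of $O(k)$ fresh variables since Bob issues exactly one query), and to pad all clauses of $S$ so they live on the ``original'' $n - O(k)$ variables, keeping the two variable blocks syntactically disjoint. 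Verifying that this padding does not blow up $m$ beyond $\Theta((n/k)^k)$ and that satisfiability of $S\cup Q_i$ genuinely certifies the index is the crux; everything else is the standard streaming-to-communication translation.
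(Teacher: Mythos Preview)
Your high-level plan---reduce from one-way \ind, let Alice stream a set $S$ of clauses encoding her bit-string, let Bob append a gadget $Q_i$ so that $S\cup Q_i$ is satisfiable iff $C_i\notin S$---is exactly the paper's approach. But there is a genuine gap in your implementation.

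You take the universe of Alice's clauses to be \emph{all} $\binom{n}{k}2^k$ size-$k$ clauses and let $S$ be an arbitrary subset. For the ``iff'' to hold you need the direction $C_i\notin S \Rightarrow S\cup Q_i$ satisfiable, which in particular requires that $S$ itself is satisfiable. An arbitrary (or uniformly random) subset of all size-$k$ clauses has no such guarantee: $S$ may contain, say, all $2^k$ sign patterns on some fixed $k$-tuple of variables, making it unsatisfiable regardless of whether $C_i$ is present. Your auxiliary-variable fix keeps $Q_i$ from interfering with $S\setminus\{C_i\}$, but it does nothing about the internal satisfiability of $S$. So the reduction as stated does not certify the index bit.

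The paper closes this gap with a structural restriction on Alice's clauses. Partition the $n$ variables into $k$ groups of size $n/k$ and let Alice's universe be only the $(n/k)^k$ \emph{all-positive} clauses that pick exactly one variable from each group. Any subset of these is trivially satisfiable by setting every variable to $\true$. Bob's gadget for index $i=(i_1,\dots,i_k)$ is then simply the $2^k-1$ clauses obtained from $(x_{1,i_1}\lor\cdots\lor x_{k,i_k})$ by negating a nonempty subset of the literals. If $A_i=1$, all $2^k$ sign patterns on these $k$ variables are present and the formula is unsatisfiable; if $A_i=0$, setting $x_{1,i_1},\dots,x_{k,i_k}$ to $\false$ and everything else to $\true$ satisfies both Bob's clauses and every remaining Alice clause (each such clause uses at least one variable outside $\{x_{1,i_1},\dots,x_{k,i_k}\}$). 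No auxiliary variables are needed, and the clause count is $\Theta((n/k)^k)$ w.h.p.\ directly. The missing idea in your proposal is precisely this ``monotone, one-per-group'' restriction that makes Alice's formula always satisfiable and makes Bob's sign-pattern gadget pinpoint $C_i$ uniquely.
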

This lower bound for \ksat is tight up to polylogarithmic factors in the following sense. If $k > \lceil \log_2 m \rceil$, we know that it is possible to satisfy all the input clauses via a probabilistic argument. In particular, we can independently assign each variable to $\true$ or $\false$ equiprobably then the probability that a clause is not satisfied is smaller than $1/m$. Hence, a union bound over $m$ clauses implies that the probability that we satisfy all the clauses is positive. On the other hand, if $k < \log m$, storing the entire stream requires $\tO(m)$ space. Hence, we have an $\tO(m)$-space algorithm. 

For \minsat, we observe that one can obtain a $1+\epsilon$ approximation in $\tO(n^2/\epsilon^2)$ space using a combination of $F_0$ sketch and brute-forte search. However, it is not clear how to run  polynomial time algorithms based on the sketch (see Section \ref{sec:min-sat} for further discussion). We provide another approach that sidesteps this entirely.

\begin{theorem}\label{thm:min-sat}
Suppose there is an $\alpha$ offline approximation algorithm for $\minsat$ that runs in $T$ time. Then, we have a randomized single-pass, $\tO(n^2/\epsilon^2)$-space streaming algorithm for \minsat  that yields an $\alpha + \epsilon$ approximation and uses $T$ post-processing time.
\end{theorem}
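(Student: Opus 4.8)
The plan is to mimic the \maxsat strategy: reduce to a sampled instance on a bounded number of clauses, run the offline $\alpha$-approximation there, and argue that the assignment we obtain carries over to the full instance. The key difference from \maxsat is that for \minsat we cannot simply discard ``large'' clauses, and a uniform sample of clauses does \emph{not} preserve the minimization objective well: a clause that is satisfied by very few assignments (a short clause, e.g.\ a single literal) contributes rigidly to $\opt$, and $\opt$ itself may be as small as $0$ or a tiny constant, so additive sampling error is not affordable. The right move, I believe, is to sample at the level of \emph{literal occurrences / variables} rather than clauses: only $O(n)$ distinct variables exist, so the ``shape'' of the instance is governed by which variables participate and with what sign in each clause, but what we actually need to store is, for each of the $O(n^2/\epsilon^2)$ relevant pairs or for a sampled sub-collection of clauses, enough information to evaluate $\sol$ and $\opt$ approximately. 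Concretely, I would first handle the easy regime $m = O(n^2/\epsilon^2)$ by storing the whole stream in $\tO(n^2/\epsilon^2)$ space and running the offline algorithm directly; the content is in $m = \omega(n^2/\epsilon^2)$.

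For the large-$m$ regime, the step I would carry out is: (i) sample $s = \Theta(n^2/\epsilon^2)$ clauses uniformly at random from the stream (via reservoir sampling), storing each sampled clause in full — since a clause has at most $n$ literals this is $\tO(n \cdot s) = \tO(n^3/\epsilon^2)$ bits, which is too much, so in fact I would instead only keep, for each sampled clause, its set of literals truncated/encoded so that the total is $\tO(n^2/\epsilon^2)$; alternatively, observe that once $m = \omega(n^2/\epsilon^2)$ there must be many clauses and we only need enough of them so that, for \emph{every} fixed assignment $x \in \{0,1\}^n$, the fraction of sampled clauses satisfied by $x$ is within $\pm \epsilon/2$ (additively, relative to $m$) of the true fraction; (ii) a Chernoff bound for a single assignment plus a union bound over all $2^n$ assignments shows $s = \Theta(n/\epsilon^2)$ samples suffice for a uniform additive $\epsilon/2 \cdot m$ approximation — but this only helps when $\opt = \Omega(\epsilon m)$. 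The genuine obstacle is the case $\opt = o(\epsilon m)$, and here is where the $n^2$ rather than $n$ in the space bound comes from: I would additionally maintain, exactly, the number of clauses satisfied by each of the ``extreme'' assignments, or more robustly maintain an $F_0$-type sketch that lets us recover, for the $\tO(n^2)$ ``small'' clauses (those with few literals, which are the only ones that can force $\opt$ to be small), the exact multiset; combining the exact handling of small clauses with the sampled/estimated handling of everything else yields a $(1+\epsilon)$-accurate surrogate objective on which the offline $\alpha$-approximation gives $\alpha + \epsilon$.

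In more detail, the steps in order: (1) split clauses into ``short'' (size $\le \log(n/\epsilon)$, say) and ``long''; (2) store all short clauses exactly — there are at most $m$ of them but only $\tO(n^2)$ \emph{distinct} ones once we bound the size, wait, that is false in general, so instead use an $L_0$/$F_0$ sketch that recovers the support when it is small, i.e.\ detect whether the number of distinct short clauses exceeds $\tO(n^2/\epsilon^2)$ and if so fall into the $\opt = \Omega(\epsilon m)$ regime automatically; (3) for long clauses keep only a uniform sample of $\Theta(n/\epsilon^2)$ of them together with the total count; (4) in post-processing, build the surrogate instance, run the offline algorithm in time $T$, output its assignment and the estimated value $\sol$; (5) argue correctness by: any fixed assignment satisfies the long clauses up to additive $\epsilon m/2$ error by Chernoff$+$union bound, and satisfies the short clauses exactly, so $|\widetilde{\sol}(x) - \sol(x)| \le \epsilon m /2 \le \epsilon \cdot \opt$ once we also verify $\opt = \Omega(\epsilon m)$ in the regime where the surrogate is lossy (which holds precisely because there are $\omega(n^2/\epsilon^2)$ clauses and few distinct short ones forces long clauses to dominate). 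Then the offline guarantee $\widetilde{\sol}(x) \le \alpha\,\widetilde{\opt}$ translates to $\sol(x) \le (\alpha + O(\epsilon))\opt$, and rescaling $\epsilon$ finishes.

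The main obstacle I anticipate is exactly the interplay in step~(2)--(5): making the case split clean so that whenever the sampling/sketching introduces error of order $\epsilon m$, we can \emph{certify} that $\opt \ge \Omega(\epsilon m)$ (otherwise an additive-$\epsilon m$ error is catastrophic for a multiplicative guarantee on a small $\opt$). I expect the paper resolves this by a dichotomy on the number of distinct short clauses (or on whether the all-true / all-false style assignments already satisfy $o(\epsilon m)$ clauses), and the $\tO(n^2/\epsilon^2)$ space is precisely the threshold at which ``few distinct short clauses'' can be stored exactly while ``many distinct short clauses'' forces $\opt$ to be large. Verifying that this dichotomy is both computable from a small sketch and sufficient is the crux; everything else is standard reservoir sampling plus Chernoff plus a union bound over $2^n$ assignments.
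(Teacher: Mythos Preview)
Your proposal has a genuine gap and diverges substantially from the paper's argument.

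The central obstacle you correctly identify---that when $\opt$ is tiny, an additive $\epsilon m$ sampling error is fatal---is not resolved by your short/long clause dichotomy. Your key claimed implication, ``many distinct short clauses forces $\opt$ to be large,'' is simply false: take all $\binom{n}{k}$ positive clauses of size $k$ (for any fixed $k$, say $k=3$); there are $\Theta(n^3)$ distinct short clauses, yet the all-$\false$ assignment satisfies none, so $\opt=0$. More generally one can have arbitrarily many distinct short clauses with $\opt=0$, so the case split in step~(2)/(5) collapses. You also never get the space down to $\tO(n^2/\epsilon^2)$: storing a uniform sample of $\Theta(n/\epsilon^2)$ clauses already costs $\tO(n^2/\epsilon^2)$ bits only if clauses are short, but your ``long'' bucket contains clauses of size up to $n$, and you do not explain how truncation preserves the \minsat objective.

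The paper's route is entirely different and avoids the short/long split. First, it reduces to the regime $\opt=O(n/\epsilon^2)$ by Lemma~\ref{lem:sampling-minsat}: guess $z\in\{1,2,4,\dots,2m\}$, subsample each clause independently with probability $p=Kn/(\epsilon^2 z)$, and show (Chernoff plus union bound over $2^n$ assignments, exactly as you suggested for one direction) that for the correct guess an $\alpha$-approximation on the sample is an $(\alpha+\epsilon)$-approximation on the original, \emph{and} the sampled instance has $\opt'=O(n/\epsilon^2)$. Second---and this is the idea you are missing---once an upper bound $u=O(n/\epsilon^2)$ on $\opt$ is assumed, any literal $\ell$ appearing in more than $u$ clauses can be forced to $\false$ on the fly (otherwise $\opt>u$); one maintains counters and ``settles'' variables as their degree crosses $u$, discarding settled-$\false$ literals from stored clauses and discarding entire clauses containing a settled-$\true$ literal. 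What remains is the sub-instance on \emph{unsettled} variables, and since each unsettled variable occurs in at most $2u$ clauses, the total number of stored literal occurrences is $O(nu)=\tO(n^2/\epsilon^2)$. One then runs the offline $\alpha$-approximation on this stored instance. No clause-size dichotomy, no $F_0$ sketch, and no attempt to certify that $\opt$ is large from the stream.
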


\paragraph{Other results.} For \minsat, if each variable appears in at most $f$ clauses, then it is possible to yield a $2\sqrt{nf}$ approximation to \minsat in $\tO(n)$ space. On the lower bound side, we show that any streaming algorithm for \minsat that decides if $\opt = 0$ must use $\Omega(n)$ space.

Finally, we present a space lower bound on streaming algorithms that approximate the optimum value  of \maxandsat up to a factor $1/2+\epsilon$. In particular, we show that  such algorithms must use $\Omega(mn)$ space.

 \paragraph{Notation and preliminaries.} We occasionally use set notation for clauses. For example, we write $x_i \in C_j$ (or $\compl{x_i} \in C_j$) if the {\em literal} $x_i$  (or $\compl{x_i}$ respectively) is in clause $C_j$. Furthermore, $C_j \setminus x_i$ denotes the clause $C_j$ with the literal $x_i$ removed. We often write $\opt(P)$ to denote the optimal value of the problem $P$, and when the context is clear, we drop $P$ and just write $\opt$. We write $\poly(x)$ to denote $x^c$  for an arbitrarily large constant $c$; in this paper, constant $c$ is absorbed in the big-$O$. Throughout this paper, we use $K$ to denote a universally large enough constant. We assume that our algorithms is provided with $m$ or an upper bound for $m$ in advance; this is necessary since the space often depends on $\log m$.
 
\paragraph{Chernoff bound.} We use the following version of Chernoff bound. If $X = \sum_{i=1}^n X_i$ where $X_i$ are negatively correlated binary random variables and $\prob{X_i=1} = p$, then for any $\eta > 0$, $\prob{\left| X  - p n \right| \geq \eta p n } \leq 2 \cdot \exp{{-\frac{\eta^2 }{2+\eta} \cdot pn }}$.

\paragraph{Organization.} We provide our main algorithms for \maxsat and \minsat in Section \ref{sec:maxsat-algorithms} and Section \ref{sec:min-sat} respectively. The space lower bound results are presented in Section \ref{sec:space-lb}; some discussions and omitted proofs are deferred to the Appendix. 

\section{Streaming Algorithms for \maxsat}\label{sec:maxsat-algorithms}
Without loss of generality, we may assume that there is no {\em trivially true} clause, i.e., clauses that contain both a literal and its negation and there are no duplicate literals in a clause.  We show that if we sample $\Theta(n/\epsilon^2)$ clauses uniformly at random and run a constant approximation algorithm for \maxsat on the sampled clauses, we obtain roughly the same approximation. Note that if $m \leq K n/\epsilon^2$ we might skip the sampling part.

\begin{lemma}\label{lem:sampling1}
For \maxsat, an $\alpha$ approximation on $K n/\epsilon^2 $ clauses sampled uniformly at random corresponds to an $\alpha-\epsilon$ approximation on the original input clauses with probability at least $1-e^{-n}$.
\end{lemma}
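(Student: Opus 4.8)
The plan is to use a uniform deviation bound over all $2^n$ possible Boolean assignments, combined with the fact that both the optimum and any fixed solution value are preserved under uniform sampling. First I would fix an arbitrary assignment $\sigma \in \{\true,\false\}^n$ and let $p_\sigma = (\text{number of clauses in the input satisfied by }\sigma)/m$ be the fraction satisfied. If we sample $t := Kn/\epsilon^2$ clauses uniformly at random (with replacement, or without — either way the samples are negatively correlated or independent), the number satisfied in the sample is a sum of (negatively correlated) indicators with mean $p_\sigma t$. I would split into two cases according to whether $p_\sigma$ is large or small: if $p_\sigma \geq \epsilon/2$, apply the multiplicative Chernoff bound quoted in the preliminaries with $\eta = \Theta(\epsilon)$ to get that the sampled fraction is within $(1\pm\epsilon/4)p_\sigma$, hence within additive $\epsilon p_\sigma/4 \leq \epsilon/4$ times $m$; if $p_\sigma < \epsilon/2$, a Chernoff/Markov argument shows the sampled fraction stays below, say, $\epsilon$. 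In all cases, for a single fixed $\sigma$, the sampled fraction approximates $p_\sigma$ to within additive $O(\epsilon)$ except with probability at most $2\exp(-\Omega(\epsilon^2 t)) = 2\exp(-\Omega(Kn))$.

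Next I would take a union bound over all $2^n$ assignments: since $2\exp(-\Omega(Kn)) \cdot 2^n \leq e^{-n}$ for $K$ a large enough constant, with probability at least $1-e^{-n}$ \emph{every} assignment's satisfied-fraction on the sample is within additive $c\epsilon$ (for a suitable small constant $c$, which we can absorb by rescaling $\epsilon$) of its true satisfied-fraction. Condition on this event. In particular, letting $\sigma^*$ be the optimal assignment for the original input, its value on the sample is at least $(\opt/m - c\epsilon)t \geq (1 - c\epsilon m/\opt)\opt t / m$; since $\opt \geq m/2$ always (a random assignment satisfies at least half the clauses in expectation, so some assignment does), this gives that the optimum on the sample is at least $(\opt/m)(1 - O(\epsilon))t$. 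Conversely, if $\tilde\sigma$ is the $\alpha$-approximate solution returned on the sample, its sample value is at least $\alpha$ times the sample optimum, i.e.\ at least $\alpha(\opt/m)(1-O(\epsilon))t$, and by the uniform deviation bound its true value is at least that minus $O(\epsilon) m$, which is at least $(\alpha - O(\epsilon))\opt$ after again using $\opt \geq m/2$. Rescaling $\epsilon$ by the hidden constant yields the claimed $\alpha - \epsilon$ guarantee.

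The main obstacle — really the only place one must be slightly careful — is converting the multiplicative Chernoff guarantee into a clean additive guarantee that holds uniformly, because for assignments with a tiny satisfied-fraction the multiplicative bound is weak; this is exactly why the two-case split on the size of $p_\sigma$ is needed, and why the final statement is phrased as an $\alpha - \epsilon$ (additive-in-the-ratio) loss rather than a multiplicative $(1-\epsilon)\alpha$ one. A secondary point to state explicitly is the use of $\opt \geq m/2$ to turn additive-$\epsilon m$ errors into relative-$\epsilon$ errors against $\opt$; this is standard but worth a line. Everything else — the Chernoff application, the union bound arithmetic choosing $K$, and the rescaling of $\epsilon$ — is routine.
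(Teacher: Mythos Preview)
Your high-level strategy --- Chernoff for each fixed assignment, union bound over all $2^n$ assignments, and the observation $\opt \geq m/2$ to convert additive error into relative error --- is exactly the paper's. There is, however, a quantitative slip in your Case~1 that breaks the union bound as written. With $\eta = \Theta(\epsilon)$ and $p_\sigma$ as small as $\epsilon/2$, the Chernoff exponent is
\[
\frac{\eta^2}{2+\eta}\, p_\sigma t \;=\; \Theta(\epsilon^2)\cdot \Theta\!\left(\tfrac{\epsilon}{2}\cdot \tfrac{Kn}{\epsilon^2}\right) \;=\; \Theta(\epsilon K n),
\]
not $\Theta(Kn)$ as you claim. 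So the per-assignment failure probability is only $\exp(-\Omega(\epsilon K n))$, and after multiplying by $2^n$ you cannot get below $e^{-n}$ unless $K$ scales like $1/\epsilon$ --- but $K$ is supposed to be a universal constant. Your Case~2 is fine; the threshold $\epsilon/2$ is just in the wrong place for Case~1 to go through with a fixed multiplicative deviation $\eta=\Theta(\epsilon)$.

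The fix is immediate and in fact eliminates the case split altogether: choose the deviation parameter to depend on $p_\sigma$, namely $\eta = c\epsilon/p_\sigma$, so that you are directly bounding the probability of additive deviation $c\epsilon t$ in the sample count. Then the Chernoff exponent becomes $\dfrac{\eta^2}{2+\eta}\,p_\sigma t = \dfrac{c^2\epsilon^2 t}{2p_\sigma + c\epsilon} \geq \dfrac{c^2\epsilon^2 t}{3} = \Theta(Kn)$ uniformly over all $p_\sigma\in[0,1]$. This is precisely what the paper does (it takes $\eta=\epsilon\,\opt/m_A$, targeting additive error $\epsilon p\,\opt$ in the sampled count, and uses $m_A\le \opt$ and $\opt\ge m/2$ to simplify the exponent), and it avoids the two-case analysis entirely.
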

\begin{proof}
We recall the folklore fact that $\opt \geq m/2$. Consider an arbitrary assignment. If it satisfies fewer than $m/2$ clauses, we can invert the assignment to satisfy at least $m/2$ clauses.

Suppose that an assignment $A$ satisfies $m_A$ clauses. Let the number of sampled clauses that $A$ satisfies be $m_A'$ and let $p = Kn/(\epsilon^2 m)$. For convenience, let $y = m_A$ and $y' = m_A'$. We observe that $\expec{y'} =p y$. 
 
Suppose the assignment $A$ satisfies clauses $C_{\sigma_1},\ldots,C_{\sigma_y}$. We define the indicator variable $X_i = [\text{$C_{\sigma_i}$ is sampled}]$ and so $y' = \sum_{i=1}^y X_i$. Let $\eta = \epsilon \opt/y$.  Since we sample without replacement, $\{ X_i\}$ are  negatively correlated. Appealing to Chernoff bound, we have
\begin{align*}
\prob{|y' - p y| \geq \epsilon p \opt} &  \leq 2 \cdot \exp{-\frac{\eta^2}{2+\eta} py} \\ 
& \leq 2 \cdot \exp{-\frac{\epsilon^2 \opt^2/y^2}{2+\epsilon \opt/y}py} \\
& = 2 \cdot \exp{-\frac{\epsilon^2 \opt^2}{2y+\epsilon \opt}p}  \leq 2 \cdot \exp{-\epsilon^2\opt p/3}.
\end{align*}
The last inequality follows because $3\opt \geq 2 y + \epsilon \opt$. Therefore, 
\begin{align*}
\prob{m_A' = p  m_A \pm \epsilon p \opt } & \geq 1-  2 \cdot\exp{-\frac{\epsilon^2 p \opt}{3}}  =  1- 2 \cdot \exp{-\frac{\epsilon^2 \frac{K n}{m \epsilon^2} \opt}{3}} \nonumber \\
& \geq 1-2 \cdot \exp{\frac{-Kn}{6}}   \geq 1-\exp{-100n}.
\end{align*}
The second inequality follows from the fact that $\opt \geq m/2$. A union bound over $2^n$ distinct assignments implies that with probability at least $1-e^{-n}$, we have $m_A' = p  m_A \pm \epsilon p \opt$ for all assignments $A$.

Suppose an assignment $\tilde{A}$ is an $\alpha$ approximation to \maxsat on the sampled clauses. Let $A^\star$ be an optimal assignment on the original input clauses. From the above,  with probability at least $1-e^{-n}$, we have $pm_{\tilde{A}} + \epsilon p \opt \geq m_{\tilde{A}}' \geq \alpha m_{A^\star}' \geq \alpha \opt p(1-\epsilon) $.
Hence, $m_{\tilde{A}} \geq  \alpha \opt (1-\epsilon)  - \epsilon  \opt  \geq (\alpha - 2 \epsilon) \opt$. Reparameterizing $\epsilon \leftarrow \epsilon/2$ completes the proof.
\end{proof}

Note that storing a clause may require $\Omega(n)$ space and hence the space use can still be $\Omega(n^2/\epsilon^2)$ after sampling. We then observe that large clauses are probabilistically easy to satisfy. We define $\beta$-large clauses as clauses that have at least $\beta$ literals.

\begin{lemma} \label{lem:large-clauses}
If each literal is set to $\true$ independently with probability at least $\gamma$, then the assignment satisfies all $(K\log m)/\gamma$-large clauses with probability at least $1-1/\poly(m)$.
\end{lemma}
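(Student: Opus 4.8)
The plan is to fix a single $(K\log m)/\gamma$-large clause $C$ and bound the probability that the random assignment fails to satisfy it, then take a union bound over all $m$ clauses. First I would recall that $C$ is a disjunction of literals, so $C$ is \emph{unsatisfied} exactly when every literal in $C$ is set to $\false$. Since each literal is set to $\true$ independently with probability at least $\gamma$, each literal in $C$ is set to $\false$ with probability at most $1-\gamma$. (Here I am using the ``no trivially-true clause / no duplicate literals'' normalization from the start of Section~\ref{sec:maxsat-algorithms}, so the literals of $C$ correspond to distinct variables and the events are genuinely independent.)

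Next, because $C$ has at least $(K\log m)/\gamma$ literals and the events ``literal $\ell$ is $\false$'' are independent across the literals of $C$, the probability that $C$ is unsatisfied is at most
\[
(1-\gamma)^{(K\log m)/\gamma} \le \expo{-\gamma \cdot (K\log m)/\gamma} = \expo{-K\log m} = m^{-K},
\]
using the standard inequality $1-\gamma \le e^{-\gamma}$. Finally I would union-bound over the (at most $m$) clauses that are $(K\log m)/\gamma$-large: the probability that some such clause is left unsatisfied is at most $m \cdot m^{-K} = m^{-(K-1)} = 1/\poly(m)$ since $K$ is a universally large constant. Hence all $(K\log m)/\gamma$-large clauses are simultaneously satisfied with probability at least $1-1/\poly(m)$.

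There is essentially no hard step here; the only thing to be careful about is the independence justification, i.e.\ that distinct literals in a clause correspond to distinct variables (guaranteed by the no-duplicate-literals and no-trivially-true-clause assumptions), so that ``all literals of $C$ are $\false$'' is a product of at least $(K\log m)/\gamma$ independent events each of probability at most $1-\gamma$. One minor subtlety: the literals are set to $\true$ with probability \emph{at least} $\gamma$, not exactly $\gamma$, but this only makes the failure probability smaller, so the bound $(1-\gamma)^{|C|}$ is still a valid upper bound. No other obstacle is anticipated.
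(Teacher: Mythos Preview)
Your proposal is correct and follows essentially the same approach as the paper: bound the failure probability of a single large clause by $(1-\gamma)^{(K\log m)/\gamma}\le e^{-K\log m}=1/\poly(m)$ using the no-trivially-true-clause assumption for independence, then union-bound over the at most $m$ clauses. The paper's proof is slightly terser but otherwise identical.
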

\begin{proof}
We can safely discard clauses that contains both  $x$ and $\overline{x}$ since they are trivially true. Thus, all literals in each clauses are set to $\true$ independently with probability at least $\gamma$. The probability that a $(K  \log m)/\gamma$-large clause is not satisfied is at most 
\[
(1-\gamma)^{(K \log m)/\gamma} \leq e^{-K \log m} \leq \frac{1}{\poly(m)}
\] 
which  implies that all such clauses are satisfied with probability at least $1-1/\poly(m)$ by appealing to a union bound over at most $m$ large clauses.
\end{proof}

From the above observations, we state a simple meta algorithm (Algorithm \ref{alg:meta1}), that can easily be implemented in several sublinear settings. We will then present two possible post-processing algorithms and the corresponding $\gamma$ values. 

~

\begin{algorithm}
\setstretch{1.35}
\DontPrintSemicolon
\caption{A  meta algorithm for sublinear \maxsat}\label{alg:meta1}
Ignore all $\beta$-large clauses where $\beta =  \frac{K \log m}{\gamma}$. Among the remaining clauses, sample and store $ K n/\epsilon^2$ clauses uniformly at random. Call this set $W$. \\ 
{\bf Post-processing:} Run an $\alpha$ approximation to \maxsat on the collected clauses $W$ where each literal is set to $\true$ independently with probability at least $\gamma$.\\
\end{algorithm}

Let $L$ and $S$ be the set of $\beta$-large and small clauses respectively. Furthermore, let $\opt_L$ and $\opt_S$ be the number of satisfied clauses in $L$ and $S$ respectively in the optimal assignment. 

\paragraph{Post-processing algorithm 1: An exponential-time $1-\epsilon$ approximation.} 

Here, we set $\gamma = \epsilon$.  Suppose in post-processing, we run the exact algorithm to find an optimal assignment $A^\star$ on the set of collected clauses $W$.  Ideally, we would like to apply Lemma \ref{lem:sampling1}  to argue that we yield a $1-\epsilon$ approximation on the small clauses in $S$ w.h.p. and then apply Lemma  \ref{lem:large-clauses} to argue that we satisfy all the large clauses $L$ w.h.p. However, the second claim requires that each literal is set to $\true$ with probability at least $\epsilon$ whereas the exact algorithm is deterministic. Our trick is to randomly perturb the  assignment  given by the exact algorithm.  

If the exact algorithm sets $x_i = q \in \{\true,\false\}$, we set  $x_i = q$ with probability $1-\epsilon$ (and $x_i = \overline{q}$ with probability $\epsilon$). We will show that this yields a $1-\epsilon$ approximation in expectation which can then be repeated $O(\frac{\log m}{\epsilon})$ times to obtain the ``w.h.p.'' guarantee. 
~

\begin{algorithm}
\setstretch{1.35}

\DontPrintSemicolon
\caption{A $1-2\epsilon$ approximation post-processing}\label{alg:processing1}
Obtain an optimal assignment $A^\star$ on $W$. Let $Q = \lceil K \epsilon^{ -1}\log m) \rceil$. \\
For each trial $t=1,2,\ldots,Q$, if {$x_i = q$ in $A^\star$}, then {set $x_i = q$ with probability $1-\epsilon$ and $x_i =  \compl{q}$ with probability $\epsilon$ in assignment $A_t$.} \\
Return the assignment $A_t$ that satisfies the most number of clauses in $W$. 
\end{algorithm}

Before analyzing the above post-processing algorithm, we observe that if we obtain an $\alpha \geq 1/2$ approximation in expectation, we can repeat the corresponding algorithm  $O(\epsilon^{-1}\log m)$ times and choose the best solution to obtain an $\alpha-\epsilon$ approximation w.h.p.

\begin{lemma}\label{lem:whp}
An $\alpha \geq 1/2$ approximation to \maxsat in expectation can be repeated $O(\epsilon^{-1}\log m)$ times to yield an $\alpha-\epsilon$ approximation w.h.p.
\end{lemma}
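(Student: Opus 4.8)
The plan is a two-step argument: first show that a \emph{single} run of the expected-$\alpha$-approximation already outputs a genuinely $(\alpha-\epsilon)$-approximate assignment with probability $\Omega(\epsilon)$, via a reverse Markov inequality, and then amplify this success probability to $1-1/\poly(m)$ by independent repetition and taking the best candidate.

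For the first step, fix the instance on which the algorithm is run (with optimum $\opt$) and let $X$ be the random number of clauses satisfied by the assignment it outputs, so $\expec{X}\ge \alpha\opt$. The crucial observation is that $X\le \opt$ deterministically, since the output is some assignment and, by definition of $\opt$, no assignment satisfies more than $\opt$ clauses. Hence $Y:=\opt - X$ is a nonnegative random variable with $\expec{Y}\le (1-\alpha)\opt$, and Markov's inequality applied to $Y$ at threshold $(1-\alpha+\epsilon)\opt$ yields
\[
\prob{X \le (\alpha-\epsilon)\opt} \;=\; \prob{Y \ge (1-\alpha+\epsilon)\opt} \;\le\; \frac{1-\alpha}{1-\alpha+\epsilon}.
\]
Since $\alpha\ge 1/2$ gives $1-\alpha\le 1/2$, a single run satisfies more than $(\alpha-\epsilon)\opt$ clauses with probability at least $\epsilon/(1/2+\epsilon)\ge\epsilon$ (we may assume $\epsilon\le 1/2$, as otherwise $\alpha-\epsilon$ is so small that a trivial balanced assignment already works, recalling $\opt\ge m/2$).

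For the second step, run the algorithm $T=\lceil K\epsilon^{-1}\log m\rceil$ times independently and return the candidate assignment satisfying the most clauses; in the post-processing regime the sampled clause set $W$ is stored in memory, so each candidate's value is directly computable, which is what makes ``take the best'' legitimate. The probability that all $T$ runs fail to reach $(\alpha-\epsilon)\opt$ is at most $(1-\epsilon)^T\le e^{-\epsilon T}\le e^{-K\log m}=m^{-K}=1/\poly(m)$ for $K$ a large enough constant, and on the complementary event the best of the $T$ assignments is an $\alpha-\epsilon$ approximation. I do not expect a real obstacle here; the one nonroutine ingredient is the bound $X\le\opt$ that enables the reverse-Markov step, together with the use of $\alpha\ge 1/2$ to keep $\tfrac{1-\alpha}{1-\alpha+\epsilon}$ bounded away from $1$ by $\Omega(\epsilon)$, so that $O(\epsilon^{-1}\log m)$ repetitions indeed suffice for the ``w.h.p.'' guarantee.
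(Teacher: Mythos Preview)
Your proof is correct and follows essentially the same strategy as the paper's: apply Markov's inequality to a nonnegative ``deficit'' variable to show a single run already achieves an $(\alpha-\epsilon)$-approximation with probability $\Omega(\epsilon)$, then amplify by $O(\epsilon^{-1}\log m)$ independent repetitions and return the best assignment. The only cosmetic difference is that you apply Markov to $\opt - X$ (using the deterministic bound $X\le\opt$), whereas the paper applies it to $m - X$ (using $X\le m$ together with $\opt\ge m/2$ and $\alpha\ge 1/2$); your choice is in fact slightly tighter, but both routes yield the required $\Omega(\epsilon)$ per-trial success probability.
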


We show that the post-processing Algorithm \ref{alg:processing1} yields a $1-\epsilon$ approximation. 
\begin{lemma}\label{lem:post-processing-1}

Algorithm \ref{alg:processing1} yields a $1-3\epsilon$ approximation on the original input w.h.p. 
\end{lemma}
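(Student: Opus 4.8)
The plan is to analyze the random perturbation step carefully and combine it with the two sampling lemmas already established. First I would set up notation: let $A^\star$ be the optimal assignment on the sampled set $W$, and for a single trial let $A$ denote the perturbed assignment obtained by flipping each variable independently with probability $\epsilon$. The key observation is that the perturbation sets every literal to $\true$ with probability at least $\epsilon$, so Lemma~\ref{lem:large-clauses} applies (with $\gamma = \epsilon$) and all $\beta$-large clauses are satisfied w.h.p. regardless of the rest of the argument. The remaining work is to show that on the small clauses $S$, the perturbed assignment still satisfies nearly as many clauses as $A^\star$ does.

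For the small clauses, I would argue in expectation first. Fix a clause $C_j \in W$ that $A^\star$ satisfies; it is satisfied by some literal $\ell \in C_j$, and $A$ keeps $\ell$ true with probability $1-\epsilon$. Hence $\expec{\text{\# of clauses in } W \text{ satisfied by } A} \geq (1-\epsilon)\, m_{A^\star}^W$, where $m_{A^\star}^W$ is the number of clauses of $W$ satisfied by $A^\star$. Since $A^\star$ is optimal on $W$ and $\opt(W) \geq |W|/2$, we have $m_{A^\star}^W \geq |W|/2$, so $A$ satisfies at least $(1-\epsilon)\opt(W) \geq (1/2)(1-\epsilon)|W| \geq |W|/3$ clauses in expectation, which is $\geq 1/2 \cdot (1-\epsilon)$ of optimum and in particular a $\geq 1/2$ approximation on $W$ in expectation. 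Then Lemma~\ref{lem:whp} lets me boost: repeating $Q = O(\epsilon^{-1}\log m)$ times and taking the best trial (which is exactly what Algorithm~\ref{alg:processing1} does) yields, w.h.p., an assignment $A_t$ satisfying at least $(1-\epsilon)\opt(W) - \epsilon\opt(W) = (1-2\epsilon)\opt(W)$ clauses of $W$; more carefully I would track that the boosting loses an additive $\epsilon$ fraction, giving $(1-2\epsilon)$ on $W$.

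Next I would transfer this from $W$ back to the full small-clause set $S$ using Lemma~\ref{lem:sampling1}: since $W$ is a uniform sample of $Kn/\epsilon^2$ clauses from $S$, a $(1-2\epsilon)$ approximation on $W$ is a $(1-2\epsilon-\epsilon)$ approximation on $S$ w.h.p., i.e. $A_t$ satisfies at least $(1-3\epsilon)\opt_S$ clauses of $S$. Combining with the large clauses: $A_t$ satisfies all of $L$ (at least $\opt_L$ of them, by Lemma~\ref{lem:large-clauses}), so the total number of satisfied clauses is at least $\opt_L + (1-3\epsilon)\opt_S \geq (1-3\epsilon)(\opt_L + \opt_S) = (1-3\epsilon)\opt$. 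This gives the claimed $1-3\epsilon$ approximation; reparameterizing $\epsilon \leftarrow \epsilon/3$ turns it into $1-\epsilon$ at the cost of constant factors in the space. Care is needed because Lemma~\ref{lem:sampling1} is stated for the whole \maxsat instance, so I would instead apply it with the instance being $S$ alone, which is legitimate.

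The main obstacle I anticipate is bookkeeping the composition of approximation losses cleanly, especially the interaction between the multiplicative loss $(1-\epsilon)$ from perturbation, the additive-in-$\opt$ loss from the Chernoff concentration in Lemma~\ref{lem:whp}, and the sampling loss from Lemma~\ref{lem:sampling1} — making sure each $\epsilon$ is applied to the right base quantity ($\opt(W)$ versus $\opt_S$ versus $\opt$) and that no term secretly compares against $|W|$ or $m$ in a way that breaks when $\opt$ is small. A second subtle point is that Lemma~\ref{lem:whp}'s boosting requires the per-trial guarantee to be a $\geq 1/2$ approximation \emph{in expectation on $W$}, which holds here because of the $\opt(W)\geq|W|/2$ folklore bound; I would state this explicitly rather than leave it implicit. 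Everything else is routine.
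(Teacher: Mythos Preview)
Your proposal is correct and follows essentially the same approach as the paper: bound the expected loss from perturbation by $(1-\epsilon)$ on $W$, boost via Lemma~\ref{lem:whp} to $(1-2\epsilon)$ on $W$ w.h.p., transfer to $S$ via Lemma~\ref{lem:sampling1} for $(1-3\epsilon)$, and handle $L$ separately via Lemma~\ref{lem:large-clauses}. The only points the paper makes slightly more explicit are (i) the union bound over the $Q$ trials so that \emph{every} $A_t$ (and hence the one returned) satisfies all large clauses, and (ii) the phrasing of the expectation step; your sentence ``which is $\geq 1/2\cdot(1-\epsilon)$ of optimum'' is garbled, but the intended conclusion that the per-trial expected approximation ratio is $1-\epsilon \geq 1/2$ (so Lemma~\ref{lem:whp} applies) is correct.
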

\begin{proof}
Clearly, in each trial $t=1,2,\ldots,Q$, each literal is set to $\true$ with probability at least $\epsilon$. According to Lemma \ref{lem:large-clauses}, each assignment $A_t$ satisfies all the large clauses in $L$ with probability at least $1-1/\poly(m)$. Taking a union bound over $Q < m$ trials, we conclude that all assignments $A_t$ satisfy all the clauses in $L$ with probability at least $1-1/\poly(m)$. 

Next, let $B$ be the set of clauses  in $W$ satisfied by the exact assignment $A^\star$. Consider any trial $t$. The expected number of satisfied clauses in $B$ after we randomly perturb the exact assignment is
\[
\sum_{C \in B} \prob{\text{$C$ is satisfied}} \geq \sum_{C \in B} (1-\epsilon) = (1-\epsilon)|B|.
\]
The first inequality follows from the observation that at least one literal in $C$ must be $\true$ in the exact assignment and it remains $\true$ with probability at least $1-\epsilon$. The assignment returned by post-processing Algorithm \ref{alg:processing1} yields a $1-2\epsilon$ approximation on $W$ with probability at least $1-1/\poly(m)$ by Lemma \ref{lem:whp}. This, in turn, implies that we obtain a $1-3\epsilon$ approximation on $S$ with probability at least $1-1/\poly(m)-e^{-n} \geq 1-1/\poly(m)-1/\poly(n)$ by Lemma \ref{lem:sampling1}. 

Therefore, we satisfy at least  $(1-3\epsilon)\opt_S + \opt_L \geq (1-3\epsilon)\opt$ clauses with probability at least $ 1-1/\poly(m)-1/\poly(n)$.
\end{proof}
\paragraph{Post-processing algorithm 2: A polynomial-time $3/4-\epsilon$ approximation.} We can set $\beta = K \log m$ if we settle for a $3/4-\epsilon$ approximation. This saves a factor $1/\epsilon$ in the memory use. Consider the standard linear programming (LP)  formulation for \maxsat given by Goemans and Williamson  \cite{GW94}. 
\begin{align*}
(LP) ~~    \text{maximize} & ~~ \sum_{j=1}^m z_j \\
    \text{subject to} & ~~ \sum_{i \in P_j} y_i + \sum_{i \in N_j} (1-y_i) \geq z_j && \text{ for all $1 \leq j \leq m $}\\
    & 0 \leq y_i,z_j \leq 1 && \text{ for all $1\leq i \leq n, 1\leq j \leq m$~.}
\end{align*}

The integer linear program where $y_i,z_j \in \{0,1\}$ corresponds exactly to the \maxsat problem. In particular, if $x_i \in C_j$ then $i \in P_j$ and if $\overline{x_i} \in C_j$ then $i \in N_j$. We associate $y_i=1$ with $x_i$ being set to $\true$ and $y_i=0$ if it is set to $\false$. Similarly, we associate $z_i = 1$  with clause $C_j$ being satisfied and 0 otherwise. Let $\opt(LP)$ denote the optimum of the above LP.

\begin{lemma} [\cite{GW94}, Theorem 5.3]\label{lem:rounding1} 
The optimal fractional solution of  the LP for \maxsat can be rounded to yield a 3/4 approximation in expectation by independently setting each variable to $\true$ with probability $1/4+y_i^\star/2$ where $y_i^\star$ is the value of $y_i$ in $\opt(LP)$. 
\end{lemma}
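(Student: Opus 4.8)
The plan is the classical biased randomized-rounding argument: reduce to a per-clause estimate, bound the probability a clause is left unsatisfied via AM--GM together with the $j$-th LP constraint, and finish with a one-variable concavity check. First, since forcing $y_i,z_j\in\{0,1\}$ turns the LP into exactly the \maxsat integer program, the LP is a relaxation and hence $\opt(LP)\ge\opt$. So it suffices to show that the random assignment satisfies at least $\tfrac34\opt(LP)=\tfrac34\sum_{j=1}^m z_j^\star$ clauses in expectation, and by linearity of expectation this reduces to proving, for every clause $C_j$, that $\prob{C_j \text{ satisfied}}\ge\tfrac34 z_j^\star$.

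To prove the per-clause claim, fix $C_j$ with $|P_j|+|N_j|=k$ and write $p_i:=1/4+y_i^\star/2\in[1/4,3/4]$. Because variables are rounded independently, $\prob{C_j \text{ not satisfied}}=\prod_{i\in P_j}(1-p_i)\prod_{i\in N_j}p_i$, a product of $k$ numbers in $[0,1]$. I would apply AM--GM to bound this by $\bigl(\tfrac1k\bigl(\sum_{i\in P_j}(1-p_i)+\sum_{i\in N_j}p_i\bigr)\bigr)^k$. Substituting $1-p_i=3/4-y_i^\star/2$ and $p_i=1/4+y_i^\star/2$, the inner sum simplifies to $\tfrac34 k-\tfrac12\bigl(\sum_{i\in P_j}y_i^\star+\sum_{i\in N_j}(1-y_i^\star)\bigr)$, and the $j$-th LP constraint bounds the parenthesized quantity below by $z_j^\star$. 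Hence $\prob{C_j \text{ not satisfied}}\le\bigl(\tfrac34-\tfrac{z_j^\star}{2k}\bigr)^k$.

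It then remains to verify the one-variable inequality $g_k(z):=1-\bigl(\tfrac34-\tfrac{z}{2k}\bigr)^k\ge\tfrac34 z$ for $z\in[0,1]$ and integer $k\ge1$, since $\prob{C_j \text{ satisfied}}=1-\prob{C_j \text{ not satisfied}}\ge g_k(z_j^\star)$. As $z/(2k)\le1/2<3/4$ on $[0,1]$, the map $z\mapsto(\tfrac34-\tfrac{z}{2k})^k$ is convex, so $g_k$ is concave, and a concave function dominating the line $\tfrac34 z$ at both endpoints of $[0,1]$ dominates it throughout. At $z=0$ this is immediate ($g_k(0)=1-(3/4)^k\ge0$); at $z=1$ one needs $(\tfrac34-\tfrac{1}{2k})^k\le\tfrac14$, which holds with equality for $k\in\{1,2\}$ and for $k\ge3$ follows from $(\tfrac34-\tfrac1{2k})^k=(\tfrac34)^k(1-\tfrac2{3k})^k\le(\tfrac34)^k e^{-2/3}\le(\tfrac34)^3 e^{-2/3}<\tfrac14$. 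Summing $\prob{C_j \text{ satisfied}}\ge\tfrac34 z_j^\star$ over $j$ then gives the claimed bound.

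The algebraic simplification of the inner sum and the small numerical check are routine; the only delicate point is the $z=1$ endpoint, where the inequality is exactly tight for unit clauses and for two-literal clauses, so one must confirm it does not fail for any intermediate or larger clause size. This tightness is also what pins down the rounding slope at $1/2$ (equivalently the intercept at $1/4$): a different bias would break the endpoint check for size-$1$ or size-$2$ clauses.
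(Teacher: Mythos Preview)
Your proof is correct and is essentially the original Goemans--Williamson argument. Note, however, that the paper does not supply its own proof of this lemma: it is quoted as Theorem~5.3 of \cite{GW94} and used as a black box, so there is nothing in the paper to compare your argument against beyond the citation itself.
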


\begin{algorithm}
\setstretch{1.35}

\DontPrintSemicolon
\caption{A $3/4-2\epsilon$ approximation post-processing }\label{alg:processing2}
Obtain an optimal solution $z^\star,y^\star$ for the linear program of \maxsat on  $W$. \\
Let $Q = \lceil K \epsilon^{-1}\log m \rceil$. For each trial $t=1,2,\ldots,Q$, set $x_i = \true$ with probability $1/4+y_i^\star/2$ in assignment $A_t$. \\
Return the assignment $A_t$ that satisfies the most number of clauses in $W$. 
\end{algorithm}

The following lemma and its proof are analogous to Lemma \ref{lem:post-processing-1}. 

\begin{lemma}\label{lem:post-processing-2}
Algorithm \ref{alg:processing2} yields a $3/4-3\epsilon$ approximation on the original input w.h.p.
\end{lemma}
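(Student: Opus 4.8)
The plan is to mirror the proof of Lemma~\ref{lem:post-processing-1} almost verbatim, substituting the Goemans--Williamson LP rounding (Lemma~\ref{lem:rounding1}) for the ``exact assignment plus perturbation'' step. First I would note that in Algorithm~\ref{alg:processing2} each variable is set to $\true$ with probability $1/4 + y_i^\star/2 \geq 1/4 \geq \epsilon$ (for $\epsilon$ small), and symmetrically set to $\false$ with probability $3/4 - y_i^\star/2 \geq 1/4 \geq \epsilon$, so every literal is $\true$ with probability at least $\epsilon \geq \gamma$ where here $\gamma$ can be taken to be a constant (say $1/4$); this is exactly the hypothesis needed for Lemma~\ref{lem:large-clauses} with $\beta = K\log m$. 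Hence in each trial $t$ all $\beta$-large clauses in $L$ are satisfied with probability $1 - 1/\poly(m)$, and a union bound over the $Q < m$ trials keeps this $1 - 1/\poly(m)$.

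Next I would handle the small clauses in $W$. By Lemma~\ref{lem:rounding1}, a single rounding trial satisfies in expectation at least $\tfrac34 \opt(LP) \geq \tfrac34 \opt(W)$ clauses of $W$, where $\opt(W)$ is the optimum of \maxsat restricted to $W$ (using $\opt(LP) \geq \opt(W)$ since the integer program is a relaxation-tightening of the LP). Since $3/4 \geq 1/2$, Lemma~\ref{lem:whp} applies: repeating $Q = \lceil K\epsilon^{-1}\log m\rceil$ times and taking the best assignment on $W$ gives a $3/4 - \epsilon$ approximation on $W$ with probability $1 - 1/\poly(m)$. Then Lemma~\ref{lem:sampling1} (applied to the set $S$ of small clauses, from which $W$ is a uniform sample of size $Kn/\epsilon^2$) upgrades this to a $3/4 - 2\epsilon$ approximation on all of $S$ with probability $1 - 1/\poly(m) - e^{-n}$. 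Combining the two parts, the returned assignment satisfies at least $(3/4 - 2\epsilon)\opt_S + \opt_L \geq (3/4 - 2\epsilon)\opt$ clauses; reparameterizing $\epsilon \leftarrow \epsilon/?$ or simply tracking constants gives the claimed $3/4 - 3\epsilon$ (and one can absorb the slack exactly as in Lemma~\ref{lem:post-processing-1}).

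One subtlety I would be careful about: Lemma~\ref{lem:whp} as stated promises an $\alpha - \epsilon$ approximation \emph{w.h.p.}\ from an $\alpha$-in-expectation algorithm, but the ``best of $Q$ trials'' argument needs that with each trial there is a non-negligible chance of beating (or nearly matching) the expectation; this is the standard Markov-type argument on the number of \emph{unsatisfied} clauses, and I would just invoke the lemma as a black box since it is stated earlier. The only genuinely new bookkeeping relative to Lemma~\ref{lem:post-processing-1} is checking the probability lower bound $1/4 + y_i^\star/2$ (and its complement) is bounded below by $\gamma$ so that Lemma~\ref{lem:large-clauses} fires with $\beta = K\log m$ rather than $\beta = (K\log m)/\epsilon$ --- this is exactly where the factor-$1/\epsilon$ savings in memory comes from, and it is the one place the two post-processing analyses genuinely differ.

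I do not expect a serious obstacle here: the proof is essentially a substitution of one in-expectation rounding scheme for another, and every ingredient (Lemmas~\ref{lem:sampling1}, \ref{lem:large-clauses}, \ref{lem:whp}, \ref{lem:rounding1}) is already in place. If anything, the mildly delicate point is making sure the failure probabilities from the three lemmas (large-clause union bound, Lemma~\ref{lem:whp}, Lemma~\ref{lem:sampling1}) all compose to $1 - 1/\poly(m) - 1/\poly(n)$ and that the $\epsilon$-slack from each ($\epsilon$ in Lemma~\ref{lem:whp}, $\epsilon$ in Lemma~\ref{lem:sampling1}) sums to the $3\epsilon$ in the statement, which is routine.
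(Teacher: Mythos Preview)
Your proposal is correct and matches the paper's approach exactly: the paper's own proof simply observes that $1/4 \leq 1/4 + y_i^\star/2 \leq 3/4$ so every literal is $\true$ with probability at least $1/4$, invokes Lemma~\ref{lem:large-clauses} with $\gamma = 1/4$, and then says ``the rest of the argument is then similar to that of Lemma~\ref{lem:post-processing-1}.'' You have spelled out precisely those remaining steps (Lemma~\ref{lem:rounding1} $\to$ Lemma~\ref{lem:whp} $\to$ Lemma~\ref{lem:sampling1}), and your observation that the accounting actually gives $3/4 - 2\epsilon$ (which trivially implies $3/4 - 3\epsilon$) is correct --- there is one fewer $\epsilon$ loss than in Lemma~\ref{lem:post-processing-1} because the LP rounding is already randomized and needs no separate perturbation step.
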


\paragraph{Handling deletions.} To support deletions, we use the $L_0$ sampler in \cite{JayaramW18, JST11}. Suppose each stream token is either an entry deletion or insertion on a vector $v$ of size $N$ then the $L_0$ sampler returns a non-zero coordinate $j$ uniformly at random. The sampler succeeds with probability at least $1-\delta$ and uses $O(\log^2 N \cdot \log (1/\delta))$ space. We can use a vector of size $N = 2^{2n}$ to encode the characteristic vector of the set of the clauses we have at the end of the stream. However, this results in an additional factor $n^2$ in the space use. Fortunately, since we only consider small clauses of size at most $\beta$, we can use a vector of size 
\[
    N = {2n \choose 1} + {2n \choose 2} + \ldots + {2n \choose \beta} \leq \sum_{i=1}^{\beta} (2n)^i \leq  O((2n)^{\beta}).
\]
Thus, to sample a small clause with probability at least $1-1/\poly(n)$, we need $O(\beta^2 \log^3 n )$ space. To sample $K n/\epsilon^2$ small clauses, the space is $\tO(n \beta^2/ \epsilon^2 )$. Finally, to sample clauses without replacement, one may use the approach described by McGregor et al. \cite{MTVV15} that avoids an increase in space.

A naive implementation to sample $s$ clauses is to run $s$ different $L_0$ samplers in parallel which results in an update time of $\tO(s)$. However, \cite{MTVV15} showed that it is possible to improve the update time to $\tO(1)$ with an additional factor  $\log N = O(\beta \log n)$ in the space use.

\paragraph{Putting it all together.} We finalize the proof of the first main result by outlining the implementation of the algorithms above in the streaming model. We further present an improvement that saves another $1/\epsilon$ factor under the no-duplicate assumption in the $3/4-\epsilon$ approximation.

\begin{proof}[Proof of Theorem \ref{thm:max-sat} (1)]
We ignore $\beta$-large clauses during the stream. Among the remaining clauses, we can sample and store $\Theta(n/\epsilon^2 )$  small clauses in the stream uniformly at random as described. 

For insertion-only streams, we may use  Reservoir sampling \cite{V85} to sample clauses. Since storing each small clause requires $\tO(\beta)$ space, the total space is $\tO(n \beta /\epsilon^2)$. We can run post-processing Algorithm \ref{alg:processing1}  where $\beta = K \epsilon^{-1} \log m$ and yield a $1-\epsilon$ approximation; alternatively, we set $\beta = (K \log m)$ and run the polynomial post-processing Algorithm \ref{alg:processing2} to yield a $3/4-\epsilon$ approximation.

We can further improve the dependence on $\epsilon$ for the $3/4-\epsilon$ approximation as follows. If the number of 1-literal clauses is at most $\epsilon m$, we can safely ignore these clauses. This is because the number of 1-literal clauses is then at most $2\epsilon \opt$. If we randomly set each variable to $\true$ with probability $1/2$, we will yield a $3/4$ approximation in expectation on the remaining clauses since each of these clause is satisfied with probability at least $1-1/2^2 = 3/4$. By Lemma \ref{lem:whp}, we can run $O(\epsilon^{-1}\log m)$ copies in parallel and return the best assignment to yield a $3/4-\epsilon$ approximation w.h.p. Therefore, the overall approximation is $3/4-3\epsilon$ and the space is $O(n/\epsilon \cdot \log m)$. 

If the number of 1-literal clauses is more than $\epsilon m$, then we know that $m \leq 2n/\epsilon$ since the number of 1-literal clauses is at most $2n$ if there are no duplicate clauses.  Since $m \leq 2n/\epsilon$, the sampling step can be ignored. It then suffices to store only clauses of size at most  $K \log m$ and run post-processing Algorithm \ref{alg:processing2}. The space use is $O(m \log m)=\tO(n/\epsilon)$.
\end{proof}

\begin{proof}[Proof of Theorem  \ref{thm:max-sat} (2)]
For insertion-deletion streams without duplicates, we sample clauses using the $L_0$ sampler as described earlier. We can run post-processing Algorithm \ref{alg:processing1}  where $\beta = K \epsilon^{-1}\log m$ and obtain a $1-\epsilon$ approximation while using $\tO(n/\epsilon^4)$ space. Alternatively, we set $\beta = K \log m$ and run the polynomial post-processing Algorithm \ref{alg:processing2} to yield a $3/4-\epsilon$ approximation. This leads to a $\tO(n/\epsilon)$-space algorithm that yields a $3/4-\epsilon$ approximation.
\end{proof}

We remark that it is possible to have an approximation slightly better than $3/4-\epsilon$ in polynomial post-processing time using  semidefinite programming (SDP) instead of linear programming  \cite{GW95}. In the SDP-based algorithm that obtains a $0.7584-\epsilon$ approximation, we also have the property that each literal is true with probability at least some constant. The analysis is analogous to what we outlined above. 

\section{Streaming Algorithms for \minsat}\label{sec:min-sat}
In this section, we provide several algorithms for \minsat in the streaming setting. One can show that $F_0$-sketch immediately gives us a $1+\epsilon$ approximation. However, the drawback of this approach is its inability to adapt polynomial time algorithms that yields approximations worse than $1+\epsilon$. See Appendix \ref{app:F0-minsat} for a detailed discussion.

\paragraph{The subsampling framework.} We first present a framework that allows us to assume that the optimum value is at most $O(n/\epsilon^2)$.
\begin{lemma}\label{lem:sampling-minsat}
Suppose $\frac{K n}{\epsilon^2} \leq \opt \leq z \leq 2\opt $ and $0 \leq \epsilon < 1/4$. An $\alpha$ approximation to \minsat on clauses sampled independently with probability $p = \frac{K n}{\epsilon^2 z}$ corresponds to an $\alpha+\epsilon$ approximation on the original input w.h.p. Furthermore, the optimum of the sampled clauses $\opt' = O(n/\epsilon^2)$.
\end{lemma}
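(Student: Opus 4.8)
The plan is to mimic the proof of Lemma~\ref{lem:sampling1}, but with two key differences: (i) we sample each clause \emph{independently} with probability $p$ rather than taking a fixed-size sample without replacement, and (ii) we are now minimizing, so we must control deviations relative to $\opt$ from below (for the optimal assignment) and from above (for an arbitrary assignment output by the offline algorithm). First I would fix an arbitrary assignment $A$, let $m_A$ be the number of clauses it satisfies in the original instance, and let $m_A'$ be the number it satisfies among the sampled clauses, so $\expec{m_A'} = p\, m_A$. Since the clauses are sampled independently, $m_A' = \sum_i X_i$ is a sum of independent indicator variables, so Chernoff applies directly with no negative-correlation bookkeeping.

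The main calculation is the concentration bound and the choice of deviation parameter. I would apply the multiplicative Chernoff bound with additive slack $\epsilon\, p\, z$ around the mean $p\, m_A$; that is, set the relative deviation to $\eta = \epsilon z / m_A$. Because the instance has $m \ge m_A \ge \opt$ but potentially $m_A$ much larger than $\opt$, the right normalization is $z$ (which satisfies $\opt \le z \le 2\opt$), not $m_A$. Plugging in, the exponent becomes $-\frac{\eta^2}{2+\eta} p\, m_A = -\frac{\epsilon^2 z^2 / m_A^2}{2 + \epsilon z/m_A} p\, m_A = -\frac{\epsilon^2 z^2}{2 m_A + \epsilon z} p$. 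Since $m_A \le m \le 2\,\opt \le 2z$ (using $\opt \ge m/2$), we get $2 m_A + \epsilon z \le 4z + \epsilon z \le 5z$, so the exponent is at most $-\frac{\epsilon^2 z}{5} p = -\frac{\epsilon^2 z}{5}\cdot\frac{Kn}{\epsilon^2 z} = -\frac{Kn}{5}$. Thus $\prob{|m_A' - p\, m_A| \ge \epsilon p z} \le 2\exp(-Kn/5)$, and a union bound over all $2^n$ assignments gives that $m_A' = p\, m_A \pm \epsilon p z$ simultaneously for all $A$, with probability $1 - 1/\poly(n)$.

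Given this event, I would finish exactly as in Lemma~\ref{lem:sampling1} but reversed for minimization. Let $A^\star$ be an optimal assignment on the original instance (so $m_{A^\star} = \opt$) and let $\tilde A$ be the assignment the offline $\alpha$-approximation returns on the sampled clauses, so $m_{\tilde A}' \le \alpha\, m_{A^\star}'$. Then $p\, m_{\tilde A} - \epsilon p z \le m_{\tilde A}' \le \alpha\, m_{A^\star}' \le \alpha(p\,\opt + \epsilon p z)$, hence $m_{\tilde A} \le \alpha\,\opt + \alpha \epsilon z + \epsilon z \le \alpha\,\opt + 4\epsilon\,\opt = (\alpha + 4\epsilon)\opt$, using $z \le 2\opt$ and $\alpha \le 1$ (wlog, or absorb into the constant). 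Reparameterizing $\epsilon \leftarrow \epsilon/4$ yields the claimed $\alpha + \epsilon$ guarantee. Finally, for the bound on $\opt' = m_{\tilde A}'$ — or rather the optimum of the sampled instance — I would note $\opt' \le m_{A^\star}' \le p\,\opt + \epsilon p z \le p\cdot 2z = \frac{2Kn}{\epsilon^2}$, so $\opt' = O(n/\epsilon^2)$.

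The main obstacle is purely bookkeeping: getting the right quantity ($z$, tied to $\opt$, rather than $m_A$) inside the deviation term so that the Chernoff exponent scales like $Kn$ uniformly over all assignments, and then being careful that the two-sided additive error $\epsilon p z$ translates — after dividing by $p$ and using $z \le 2\opt$ — into an $O(\epsilon)\opt$ \emph{additive} error on the number of satisfied clauses, which is what turns a multiplicative-$\alpha$ guarantee on the subsample into a multiplicative-$(\alpha+\epsilon)$ guarantee on the original. The hypothesis $\opt \ge Kn/\epsilon^2$ is exactly what is needed to make $p \le 1$ a valid sampling probability and to kill the $2^n$ union bound; the hypothesis $\epsilon < 1/4$ and $z \le 2\opt$ keep all the constant factors in line. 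No genuinely new idea beyond Lemma~\ref{lem:sampling1} is required.
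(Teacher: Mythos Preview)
Your argument has a genuine gap at the step ``$m_A \le m \le 2\opt \le 2z$ (using $\opt \ge m/2$).'' The fact $\opt \ge m/2$ is a \emph{\maxsat} fact, obtained by flipping an assignment that satisfies fewer than half the clauses. For \emph{\minsat} the inequality goes the other way: $\opt \le m/2$, and in general $m$ (hence $m_A$) can be arbitrarily large compared to $\opt$ and $z$. Once $m_A \gg z$, your Chernoff exponent
\[
\frac{\epsilon^2 z^2}{2m_A + \epsilon z}\, p \;=\; \frac{\epsilon^2 z^2}{2m_A + \epsilon z}\cdot \frac{Kn}{\epsilon^2 z}
\;=\; \frac{Knz}{2m_A + \epsilon z}
\]
is no longer $\Omega(n)$, so the union bound over $2^n$ assignments collapses. (A related slip: for a \minsat $\alpha$-approximation one has $\alpha \ge 1$, not $\alpha \le 1$.)

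The paper's proof avoids this by using a \emph{multiplicative} deviation $\epsilon\, p\, m_A$ rather than your additive $\epsilon\, p\, z$. The point is that in \minsat every assignment satisfies at least the optimum, so $m_A \ge \opt \ge z/2$ for \emph{all} $A$; hence $\epsilon^2 p\, m_A /3 \ge \epsilon^2 p\, z/6 = Kn/6$, which survives the $2^n$ union bound uniformly in $A$. With $m_A' = (1\pm\epsilon)p\, m_A$ for all $A$, the finishing chain becomes $(1-\epsilon)p\, m_{\tilde A} \le m_{\tilde A}' \le \alpha\, m_{A^\star}' \le \alpha(1+\epsilon)p\,\opt$, giving $m_{\tilde A} \le (1+O(\epsilon))\alpha\,\opt$ and, after reparameterization, the $(\alpha+\epsilon)$ guarantee; the bound $\opt' \le (1+\epsilon)p\,\opt = O(n/\epsilon^2)$ follows immediately. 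In short, the missing idea is that for \minsat the useful inequality is the lower bound $m_A \ge \opt$, and this forces you to work with multiplicative rather than additive error.
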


Based on the above lemma, we can run any $\alpha$ approximation algorithm on sampled clauses with sampling probability $p = \frac{K n}{\epsilon^2 z}$ to yield an almost as good approximation. Furthermore, the optimum value on these sampled clauses is at most $O(n/\epsilon^2)$. 

Since we do not know $z$, we can run different instances of our algorithm corresponding to different guesses $z=1,2,4,\ldots, 2m$. At least one of these guesses satisfies $\opt \leq z \leq 2\opt$. The algorithm instances that correspond to wrong guesses may use too much space. In that case, we simply terminate those instances. We then return the best solutions among the instances that are not terminated. Hereinafter, we may safely assume that $\opt = O(n/\epsilon^2)$.

\begin{proof}[Proof of Theorem \ref{thm:min-sat}]
We first show that an $\tO(n^2/\epsilon^2)$-space algorithm exists. Let $\C(\ell) = \{ C_j : \ell \in C_j \}$ be the set of clauses that contains the literal $\ell$. It must be the case that either all clauses $\C(x_i)$ are satisfied or all clauses in $\C({ \compl{x_i}})$ are satisfied. For time being, assume we have an upper bound $u$ of $\opt$.

Originally, all variables are marked as unsettled. If at any point during the stream, $|\C(x_i)| > u$, we can safely set $x_i \leftarrow \false$; otherwise, $\opt > u$ which is a contradiction. We then say the variable $x_i$ is settled. The case in which $|\C(\compl{x_i})| >  u$ is similar. 

When a clause $C_j$ arrives, if $C_j$ contains a settled literal $\ell$ that was set to $\true$, we may simply ignore $C_j$ since it must be satisfied by any optimal solution. On the other hand, if the fate of $C_j$ has not been decided yet, we store $C_j \setminus Z$ where $Z$ is the set of settled literals in $C_j$ that were set to $\false$ at this point. For example, if $C_j = (x_1 \lor x_2 \lor \compl{x_3})$ and $x_1 \leftarrow \false$ at this point while $x_2$ and $x_3$ are unsettled, then we simply store $C_j = (x_2 \lor \compl{x_3})$. 

Since we store the input induced by unsettled variables and each unsettled variable appears in at most $2u$ clauses (positively or negatively), the space use is $\tO(nu)$. By using the aforementioned subsampling framework, we may assume that $u = O(n/\epsilon^2)$. Thus, the total space is $\tO(n^2/\epsilon^2)$.
\end{proof} 

If each variable appears in at most $f$ clauses, we have a slightly non-trivial approximation that uses less space.
\begin{theorem}\label{thm:min-sat2}
Suppose each variable appears in at most $f$ clauses. There exists a single-pass, $\tO(n)$-space algorithm that yields a $2\sqrt{fn}$ approximation to \minsat. 
\end{theorem}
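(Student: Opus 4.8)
The plan is to isolate the only genuinely hard regime -- when $\opt$ is small -- and to exploit the ``settle a frequent literal'' idea already used in the proof of Theorem~\ref{thm:min-sat}, now with a threshold chosen as $\Theta(\sqrt{fn})$ rather than as an upper bound on $\opt$, together with the standard fact that $\minsat$ is minimum vertex cover of a ``conflict graph'' on clauses.

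First I would dispose of two easy extremes. If every variable occurs with a single polarity throughout the stream, then $\opt=0$: set each variable so that all its literals become false, which falsifies every clause; this case is detected with two bits per variable, i.e.\ $O(n)$ space. Otherwise $\opt\ge 1$, and it suffices to output an assignment satisfying at most $2\sqrt{fn}\cdot\opt$ clauses. In particular the all-false assignment, which satisfies at most $m\le fn$ clauses (since the total number of literal occurrences is at most $fn$), already works whenever $\opt\ge \sqrt{fn}/2$. So I may assume $\opt<\sqrt{fn}$, running the all-false branch in parallel at $O(n)$ extra cost to cover the complementary case.

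Now set $\tau=\Theta(\sqrt{fn})$ and run the settle rule: as soon as $|\C(x_i)|$ exceeds $\tau$ settle $x_i\gets\false$, and symmetrically settle $x_i\gets\true$ once $|\C(\overline{x_i})|$ exceeds $\tau$ (if both ever exceed $\tau$ we learn $\opt>\tau\ge\sqrt{fn}$ and fall back to all-false). Because we are in the regime $\opt<\tau$, any optimal assignment must make the same settlings -- otherwise it would satisfy more than $\tau>\opt$ clauses -- so the clauses discarded as satisfied by a settled literal are also satisfied by $\opt$, hence number at most $\opt$. Every surviving (residual) clause then lies entirely on the unsettled variables, each of which occurs in at most $2\tau=O(\sqrt{fn})$ clauses, and the residual optimum is at most $\opt$. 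It remains to solve this bounded-occurrence residual in $\tilde O(n)$ space, obtain an assignment satisfying $O(\opt)$ of its clauses -- e.g.\ through the vertex-cover view: $\minsat$ equals minimum vertex cover of the conflict graph whose vertices are clauses and whose edges join clauses sharing a variable in opposite polarities, and a maximal matching in that graph yields a $2$-approximation -- and then re-insert the settled variables. Adding the $O(\opt)$ satisfied residual clauses to the $\le\opt$ discarded ones gives $O(\opt)\le 2\sqrt{fn}\cdot\opt$ satisfied clauses overall.

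The main obstacle is precisely the step of handling the residual in $\tilde O(n)$ bits: even a bounded-occurrence residual can have $\Theta(n\tau)=\Theta(n\sqrt{fn})$ literal occurrences, which does not fit. I expect to resolve this by never storing residual clauses in full, but maintaining per variable only $O(\polylog)$ bits summarizing the conflict it participates in -- which polarity (if any) it has among the residual clauses not yet placed in the vertex cover, plus a counter -- so that a $2$-approximate cover of the conflict graph (equivalently, an assignment falsifying a conflict-free family of residual clauses) can be produced from $O(n\,\polylog)$ bits, with the hypothesis ``each variable occurs in at most $f$ clauses'' used to bound this bookkeeping; closing this accounting is the crux. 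If that compression is not tight enough, a fallback is to recurse: the residual's occurrence bound $O(\sqrt{fn})$ plays the role of a new, much smaller $f$, so iterating the settle step $O(\log\log n)$ times drives the occurrence bound down to $\polylog n$, after which the residual is small enough to store and solve outright.
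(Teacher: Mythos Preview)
Your proposal has a genuine gap, and you have already identified it yourself: you reduce to a residual \minsat instance in which every unsettled variable occurs $O(\sqrt{fn})$ times, and then you need to solve this residual to within a constant factor using $\tilde O(n)$ bits. Neither of your suggested fixes closes this. The per-variable ``conflict bookkeeping'' sketch does not give you a maximal matching in the clause--conflict graph, because that graph has one vertex per \emph{clause}, not per variable; knowing only the polarity pattern and a counter per variable does not let you recover which residual clauses are mutually conflict-free. The recursion idea fails for a more basic reason: after the first settling you no longer have the residual clauses available --- you discarded them to stay within $\tilde O(n)$ space --- so you cannot re-run the settling step with a smaller threshold in a single pass.

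The paper's argument is much simpler and bypasses your obstacle entirely. Instead of thresholding on \emph{literal occurrence}, it thresholds on \emph{clause size}: it ignores all clauses with more than $\sqrt{fn}$ literals (there are fewer than $\sqrt{fn}$ such clauses since the total literal count is at most $fn$), and among the remaining small clauses it sets each $x_i$ greedily to whichever value makes the smaller of $|S_{x_i}|,|S_{\overline{x_i}}|$ true. This needs only two counters per variable, hence $\tilde O(n)$ space. The analysis is a double-counting: since each small clause has at most $\sqrt{fn}$ literals, $\sum_i |S_{\ell_i^\star}| \le \sqrt{fn}\cdot\opt_S$, and by the greedy choice $\sum_i |S_{\ell_i}| \le \sum_i |S_{\ell_i^\star}|$; adding the at most $\sqrt{fn}\le\sqrt{fn}\cdot\opt$ large clauses gives the $2\sqrt{fn}$ factor. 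The step you were missing is that the right object to bound is the \emph{clause size}, not the per-literal frequency, because then a single greedy pass over the variables already controls the total, with no residual subproblem to solve.
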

\begin{proof}
It is easy to check if  $\opt = 0$ in $\tO(n)$ space. For each variable $x_i$, we keep track of whether it always appears positively (or negatively); if so, it is safe to set  $x_i \leftarrow \false$ (or $x_i \leftarrow \true$ respectively).  If that is the case for all variables, then $\opt = 0$. We run this in parallel with our algorithm that handles the case $\opt > 0$.

Now, assume $\opt \geq 1$. Since each variable involves in at most $f$ clauses. There are fewer than $\sqrt{fn}$ clauses with size larger than $\sqrt{fn}$. We ignore these large clauses from the input stream. Let $S_{x_i} = \{ C_j : |C_j| \leq \sqrt{fn} \text{ and } x_i \in C_j \}$. For each variable $x_i$, if $|S_{x_i}| \leq |S_{\compl{x_i}}|$, then we set $x_i \leftarrow \true$; otherwise, we set $x_i \leftarrow \false$. Let $\ell_i = x_i $ if $x_i$ was set to $\true$ and $\ell_i = \compl{x_i} $ if $x_i$ was set to $\false$ by our algorithm.  Furthermore, let  $\ell_i^\star = x_i $ if $x_i$ was set to $\true$ and $\ell_i^\star = \compl{x_i} $ if $x_i$ was set to $\false$ in the optimal assignment. Let $\opt_S$ be the number of small clauses that are satisfied by the optimal assignment and let $\opt$ be the number of clauses that are satisfied by the optimal assignment.

Note that $\sum_{i=1}^n |S_{\ell_i^\star}|  \leq \sqrt{fn} \opt_S$ since each small clauses belong to at most $\sqrt{fn}$ different $S_{\ell_i^\star}$. The number of clauses that our algorithm satisfies at most 
\[
\sqrt{fn} + \sum_{i=1}^n |S_{\ell_i}| \leq \sqrt{fn} + \sum_{i=1}^n |S_{\ell_i^\star}| \leq \sqrt{fn}\opt + \sqrt{fn} \opt_S
 \leq 2\sqrt{fn} \opt. \qedhere 
 \]
\end{proof}

\section{Space Lower Bounds} \label{sec:space-lb}

 \paragraph{Lower bound for $\ksat$.}  Let us first prove that any streaming algorithm that decides whether a $\ksat$ formula, where $k < \log m$,  is satisfiable requires $\Omega(\min\{m, n^k \})$ space. We consider the one-way communication problem $\ksat$ defined as follows. In this communication problem, Alice and Bob each has a set of $\ksat$ clauses. Bob wants to decide if there is a Boolean assignment to the variables that satisfies all the clauses from both sets. The protocol can be randomized, but the requirement is that the success probability is at least some large enough constant. Note that a  streaming algorithm that solves $\ksat$ yields a protocol for the $\ksat$ communication problem since Alice can send the memory of the algorithm to Bob.

We proceed with a simple claim. If $k=2$, then this claim says that there is no assignment that satisfies all of the following clauses $(x \lor y), (\compl{x} \lor y),(x \lor \compl{y}),(\compl{x} \lor \compl{y})$. The claim generalizes this fact for all $k \in \mathbb{N}$. 
 
 \begin{claim} \label{clm:1}
Consider $k$ Boolean variables $x_1,\ldots,x_k$. No Boolean assignment simultaneously satisfies all clauses in the set $S_k = \{ (\bigvee_{\ell \in S}  \compl{x_\ell} )\bigvee(\bigvee_{\ell' \notin S}  {x_{\ell'}} ) : S \subseteq [k] \}$ for all $k \in \mathbb{N}$.
 \end{claim}
 \begin{proof}
 The proof is a simple induction on $k$. The base case $k=1$ is trivial since $S_1$ consists of two clauses $x_1$ and $\compl{x_1}$. Now, consider $k > 1$. Suppose that there is an assignment that satisfies all clauses in $S_k$. Without loss of generality, suppose $x_k = \true$ in that assignment (the case $x_k =\false$ is analogous). Consider the set of clauses $A = \{ \compl{x_k} \vee \phi : \phi \in S_{k-1} \} \subset S_k$. Since $\compl{x_k} =\false$, it must be the case that all clauses in $S_{k-1}$ are satisfied which contradicts the induction hypothesis. 
 \end{proof}
 
Without loss of generality, we assume $n/k$ is an integer.  Now, we consider the  one-way communication problem $\ind$. In this problem, Alice has a bit-string $A$ of length $t$ where each bit is independently set to 0 or 1 uniformly at random. Bob has a random index $i$ that is unknown to Alice. For Bob to output $A_i$ correctly with probability 2/3, Alice needs to send a message of size $\Omega(t)$ bits \cite{KremerNR99}.   We will show that a protocol for for $\ksat$ will yield a protocol for $\ind$.

\begin{proof}[Proof of Theorem \ref{thm:lb-ksat}]
Without loss of generality, assume $n/k$ is an integer. Consider the case where Alice has a bit-string $A$ of length ${(n/k)^k}$. For convenience, we index $A$ as $[n/k]\times \ldots \times [n/k] = [n/k]^k$. We consider $n$ Boolean variables $\{x_{a,b} \}_{a \in [k], b \in [n/k]}$. 

For each $j \in [n/k]^k$ where $A_j = 1$, Alice generates the clause $(x_{1,j_1} \lor x_{2,j_2} \lor \ldots \lor x_{k,j_k})$.  Let $S = \{ (1,i_1),(2,i_2)\ldots,(k,i_k) \}$. Bob, with the index $i \in [n/k]^k$, generates the clauses in 
\[\left\{ \left(\bigvee_{l \in Z}  \compl{x_l} \right) \left(\bigvee_{l' \in S \setminus Z}  {x_{\ell'}} \right) : Z \subseteq S \right\} \setminus \{(x_{1,i_1} \lor x_{2,i_2} \lor \ldots \lor x_{k,i_k}) \}.\]

If $A_i = 0$, then all the generated clauses can be satisfied by setting the variables $x_{l}  = \false$ for all $l \in S$ and all remaining variables to $\true$. All the clauses that Bob generated are satisfied since each contains at least one literal among $\compl{x_{1,i_1}},\compl{x_{2,i_2}} \ldots, \compl{x_{k,i_k}}$. Note that the clause $(x_{1,i_1} \lor  x_{2,i_2} \lor \ldots \lor x_{k,i_k})$ does not appear in the stream since $A_i = 0$. For any $j \in [n/k]^k$, where $(j_1,\ldots,j_k) \neq (i_1,\ldots,i_k)$, such that $A_j = 1$, there must be some $j_z \notin \{ i_1,i_2,\ldots,i_k \}$. Hence, $x_{z, j_z}=\true$ and therefore the clause $(x_{1,j_1} \lor \ldots \lor x_{k,j_k})$ is satisfied.

If $A_i = 1$, then by Claim \ref{clm:1}, we cannot simultaneously satisfy the clauses 
\[ \left\{ \left(\bigvee_{l \in Z}  \compl{x_l} \right) \left(\bigvee_{l' \in S \setminus Z}  {x_{\ell'}} \right) : Z \subseteq S \right\} \] that were generated by Alice and Bob. Hence, any protocol for $\ksat$ yields a  protocol for $\ind$. Hence, such a protocol requires $\Omega(n^k)$ bits of communication.

Finally, we check the parameters' range where the lower bound operates.  We first show that in the above construction, the number of clauses $m$ concentrates around $\Theta((n/k)^k)$. First, note that for fixed $n$, the function $(n/k)^k$ is increasing in the interval  $k \in [1,n/e]$. Since we assume $k \leq n/e$, we have $(n/k)^k \geq n$. Let $m_A$ be the number of clauses Alice generates. Because independently each $A_j = 1$ with probability 1/2, by Chernoff bound, we have 
\[
\prob{| m_A - 0.5(n/k)^k| > 0.01 (n/k)^k} \leq \exp{-\Omega(({n}/{k})^k)} \leq \exp{-\Omega(n)}. 
\]
Bob generates $m_B = 2^k -1 $ clauses. Thus, w.h.p., $
m = m_A + m_B = \left( 0.5(n/k)^k + 2^k - 1  \right) \pm 0.01 (n/k)^k.$
Note that $2^k \leq (n/k)^k$ since $k \leq n/e < n/2$. As a result, w.h.p.,
$0.49(n/k)^k -1  \leq m \leq 1.51(n/k)^k +1$ which implies $m=\Theta((n/k)^k)$.
\end{proof}

\paragraph{Lower bound for $\maxandsat$.} We show that a $(1/2+\epsilon)$-approximation for $\maxandsat$ requires $\Omega(mn)$ space. In particular, our lower bound uses the algorithm to distinguish the two cases $\opt =1 $ and $\opt = 2$. This lower bound is also a reduction from the $\ind$ problem. Recall that in \maxandsat, each clause is a conjunction of literals. We first need a simple observation. 

\begin{claim} \label{clm:clause-systems}
Let $T=K \log m$. There exists a set of $m$ (conjunctive) clauses $C_1,\ldots,C_m$ over the set of Boolean variables $\{z_1,\ldots,z_T\}$ such that exactly one clause can be satisfied.  
\end{claim}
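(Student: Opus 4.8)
\textbf{Proof proposal for Claim \ref{clm:clause-systems}.}

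The plan is to realize each of the $m$ clauses as a \emph{full conjunction}: for a subset $S \subseteq [T]$, let $C_S$ be the conjunctive clause $\bigl(\bigwedge_{i \in S} z_i\bigr) \wedge \bigl(\bigwedge_{i \notin S} \compl{z_i}\bigr)$, which uses every variable $z_1,\ldots,z_T$ exactly once (either positively or negatively). The crucial structural fact is that $C_S$ is satisfied by the unique assignment that sets $z_i = \true$ for $i \in S$ and $z_i = \false$ for $i \notin S$, and by no other assignment; in particular, two distinct clauses $C_S$ and $C_{S'}$ with $S \neq S'$ can never be satisfied simultaneously.

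First I would note that there are $2^T$ distinct clauses of the form $C_S$, one for each $S \subseteq [T]$. Since $T = K \log m$ and $K$ is a universally large enough constant, we have $2^T \geq m$ (this is where the hypothesis $T = \Theta(\log m)$ is used; a constant $K \geq 2$ already suffices regardless of the base of the logarithm). Hence I can pick $m$ distinct subsets $S_1,\ldots,S_m \subseteq [T]$ and set $C_1 = C_{S_1},\ldots,C_m = C_{S_m}$.

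Finally I would verify the ``exactly one'' property. Fix any assignment $a \in \{\true,\false\}^T$, and let $S_a = \{ i : a(z_i) = \true\}$. By the structural fact above, $C_j = C_{S_j}$ is satisfied by $a$ if and only if $S_j = S_a$; since the $S_j$ are distinct, at most one index $j$ can satisfy this, so $a$ satisfies at most one of $C_1,\ldots,C_m$. Conversely, for each $j$ the assignment with $S_a = S_j$ satisfies $C_j$, so each clause is individually satisfiable. Therefore the maximum number of simultaneously satisfiable clauses is exactly one, as claimed.

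There is essentially no obstacle here: the construction is elementary and the only point requiring a word of care is the counting inequality $2^T \geq m$ (handled by choosing $K$ large enough) and the observation that conjunctive clauses are permitted to mention all $T$ variables, which is what makes each clause pin down a single assignment.
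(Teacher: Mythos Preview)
Your proof is correct, and it is genuinely different in flavor from the paper's own argument. Both proofs rely on the same structural fact---a full conjunction over all $T$ variables is satisfied by exactly one assignment, so two distinct full conjunctions are mutually exclusive---but they differ in how they obtain $m$ distinct such clauses. The paper argues probabilistically: it generates each $C_i$ by independently choosing, for every $j$, whether $z_j$ or $\compl{z_j}$ appears, and then uses a union bound over $\binom{m}{2}$ pairs to show that no two of the random clauses coincide (and hence no two are simultaneously satisfiable) with probability $1 - 1/\poly(m)$. You instead count directly: there are $2^T \ge m$ full conjunctions, so you may simply select $m$ distinct ones. Your route is more elementary and fully constructive, and it makes transparent that the only role of $T = K\log m$ is the counting inequality $2^T \ge m$; the paper's probabilistic argument is a slight detour to the same destination and does not buy anything additional here beyond matching the paper's overall probabilistic style.
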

\begin{proof}
We show this via a probabilistic argument. For $i = 1,2,\ldots,m$ and $j \in [T]$, let $C_i = \land_{j=1}^{T} l_{i,j}$ where each $l_{i,j}$ is independently set to $z_j$ or $\compl{z_j}$ equiprobably. 

Two different clauses $C_i$ and $C_{i'}$ can be both satisfied if and only if for all $j=1,2,\ldots, T$, it holds that the variable $z_j$ appears similarly (i.e., as $z_j$  or as $\compl{z_j}$) in both clauses. This happens with probability $(1/2)^{K \log m} = 1/\poly(m)$. Hence, by a union bound over ${m \choose 2}$ pairs of clauses, exactly one clause can be satisfy with probability at least $1-1/\poly(m)$.
\end{proof}

Alice and Bob agree on such a set of clauses $C_1,\ldots,C_m$  over the set of Boolean variables $\{z_1,\ldots,z_T\}$ (which is hard-coded in the protocol). Suppose Alice has a bit-string of length $mn$, indexed as $[m] \times [n]$. For each $k \in [m]$: if $A_{k,j}=1$, she adds the literal $x_j$ to clause $D_k$ and if $A_{k,j}=0$, she adds the literal $\compl{x_j}$ to clause $D_k$. Finally, she concatenates $C_k$ to $D_k$. More formally, for each $k \in [m]$,
\[
D_k = \left( \bigwedge_{j\in [n]: A_{k,j} = 1} x_j \right) \bigwedge \left( \bigwedge_{j \in [n]: A_{k,j} = 0} \compl{x_j} \right)  \bigwedge  \left(C_k \right).
\]

Bob, with the index $(i_1,i_2) \in [m] \times [n]$, generates the clauses $D_B = x_{i_2} \wedge C_{i_1}$. If $A_{i_1,i_2} = 1$, then both clauses $D_B$ and $D_{i_1}$ can be satisfied since the set of literals in $D_B$ is a subset of the set of literals in $D_{i_1}$. If $A_{i_1,i_2} = 0$, exactly one of $D_B$ and $D_{i_1}$ can be satisfied since $x_{i_2} \in D_B$ and $\compl{x_{i_2}} \in D_{i_1}$. Furthermore, by Claim \ref{clm:clause-systems}, we cannot simultaneously satisfy $D_B$ and any other clause $D_{k}$ where $k \neq i_1$ (since $C_{i_1}$ is part of $D_B$ and $C_k$ is part of $D_k$). Therefore, if $\opt = 2$, then $A_{i_1,i_2} = 1$ and if $\opt = 1$ then $A_{i_1,i_2} = 0$. Thus, a streaming algorithm that distinguishes the two cases yields a protocol for the communication problem. The number of variables in our construction is $n + K \log m \leq 2 n$, if we assume $n \geq K \log m$. Reparameterizing $n \leftarrow n/2$ gives us the theorem below.

\begin{theorem}
Suppose $n = \Omega(\log m)$. Any single-pass streaming algorithm that decides if $\opt = 1$ or $\opt = 2$ for $\maxandsat$ with probability 2/3 must use $\Omega(mn)$ space. 
\end{theorem}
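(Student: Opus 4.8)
The plan is to turn the construction sketched just above into a clean reduction from the one-way communication problem $\ind$ on a string of length $mn$, for which Alice's message must have $\Omega(mn)$ bits when Bob is to recover a queried bit with probability $2/3$ \cite{KremerNR99}. A single-pass streaming algorithm for $\maxandsat$ immediately yields such a one-way protocol: Alice feeds her clauses to the algorithm and sends the resulting memory contents to Bob, who resumes the algorithm on his clauses. Hence it suffices to exhibit, for every $A\in\{0,1\}^{[m]\times[n]}$ and every index $(i_1,i_2)\in[m]\times[n]$, an instance of $\maxandsat$ on $O(n)$ variables whose optimum is $2$ if $A_{i_1,i_2}=1$ and $1$ if $A_{i_1,i_2}=0$; a streaming algorithm distinguishing $\opt=1$ from $\opt=2$ then decides the queried bit.

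First I would fix, via Claim \ref{clm:clause-systems}, a rigid system $C_1,\ldots,C_m$ over $T=K\log m$ fresh variables $z_1,\ldots,z_T$: each $C_i$ is individually satisfiable (by the unique assignment making all its literals true), but no single assignment satisfies two distinct $C_i$'s. Alice then forms $D_k=\bigl(\bigwedge_{j:A_{k,j}=1}x_j\bigr)\wedge\bigl(\bigwedge_{j:A_{k,j}=0}\compl{x_j}\bigr)\wedge C_k$ for $k\in[m]$ and Bob appends $D_B=x_{i_2}\wedge C_{i_1}$, exactly as described. The total number of variables is $n+T\le 2n$ under the hypothesis $n=\Omega(\log m)$, so the final bound is obtained by reparameterizing $n\leftarrow n/2$.

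The core of the argument is the case analysis on $A_{i_1,i_2}$. If $A_{i_1,i_2}=1$, the literals of $D_B$ form a subset of the literals of $D_{i_1}$ (the variable $x_{i_2}$ occurs positively in $D_{i_1}$, and $C_{i_1}$ is a common subclause), so any assignment satisfying $D_{i_1}$ — which exists, since its $x$-part has one literal per variable and its $z$-part is $C_{i_1}$ — satisfies $D_B$ as well, giving $\opt\ge 2$. If $A_{i_1,i_2}=0$, I would argue $\opt=1$: each $D_k$ is individually satisfiable so $\opt\ge1$; any two distinct $D_k,D_{k'}$ are jointly unsatisfiable because $C_k\subseteq D_k$, $C_{k'}\subseteq D_{k'}$ and the $C$-system is rigid; $D_B$ and any $D_k$ with $k\ne i_1$ clash for the same reason since $C_{i_1}\subseteq D_B$; and $D_B$ clashes with $D_{i_1}$ because $x_{i_2}\in D_B$ while $\compl{x_{i_2}}\in D_{i_1}$. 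Finally, $\opt\le 2$ always, since any three of the clauses $\{D_1,\ldots,D_m,D_B\}$ contain two distinct clauses with distinct (hence jointly unsatisfiable) $C$-parts.

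The only delicate point is the pair $(D_B,D_{i_1})$: it shares the common subclause $C_{i_1}$, so rigidity alone does not separate it, which is precisely why Bob's query variable $x_{i_2}$ must be included and why the $x$-literals are needed in addition to the rigid $z$-gadget. Beyond this, one just has to check that Claim \ref{clm:clause-systems} fails with probability only $1/\poly(m)$ and that $m$ is polynomially bounded, so that this failure is negligible against the constant success probability demanded of the protocol (adjusting $K$ if necessary). Putting the case analysis together with the $\Omega(mn)$ lower bound for $\ind$ yields the theorem.
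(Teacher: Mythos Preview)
Your proposal is correct and follows essentially the same reduction from $\ind$ on a length-$mn$ string as the paper: the same clauses $D_k$ for Alice, the same $D_B=x_{i_2}\wedge C_{i_1}$ for Bob, the same case analysis on $A_{i_1,i_2}$, and the same variable-count reparameterization. If anything, you are slightly more explicit than the paper in verifying that $\opt\le 2$ always holds and that $\opt\ge 1$ in the $A_{i_1,i_2}=0$ case; the paper leaves these implicit. One cosmetic remark: the rigid system of Claim~\ref{clm:clause-systems} is established by a probabilistic argument but is then \emph{fixed} and hard-coded into the protocol, so there is no residual failure probability from the claim to absorb into the protocol's error --- your phrasing about the claim ``failing with probability $1/\poly(m)$'' is unnecessary once you have said ``fix''.
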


\paragraph{Lower bound for \minsat.} For \minsat, it is easy to show that deciding if $\opt > 0$ requires $\Omega(n)$ space. The algorithm in the proof of Theorem \ref{thm:min-sat2} shows that this lower bound is tight.

\begin{theorem}\label{thm:lb-minsat}
Any single-pass streaming algorithm that decides if $\opt >0$ for $\minsat$ with probability 2/3 must use $\Omega(n)$ space. 
\end{theorem}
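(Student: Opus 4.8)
\textbf{Proof plan for Theorem \ref{thm:lb-minsat}.}
The plan is to give a reduction from the one-way communication problem $\ind$ on bit-strings of length $n$, for which the referee (Alice) must transmit $\Omega(n)$ bits for Bob to recover his queried coordinate with probability $2/3$ \cite{KremerNR99}; a single-pass streaming algorithm deciding ``$\opt>0$'' immediately yields such a protocol by having Alice run the algorithm on her part of the stream and forward the memory state to Bob.

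First I would record the elementary characterization of when $\opt=0$ for $\minsat$. A clause (a disjunction of literals) is falsified exactly when all of its literals are false, so an assignment falsifies every clause in the stream iff it sets $x_i=\false$ for every variable that occurs as the literal $x_i$ somewhere and $x_i=\true$ for every variable that occurs as the literal $\compl{x_i}$ somewhere; such an assignment exists iff no variable occurs in both polarities, and conversely if some $x_i$ occurs in both polarities then any assignment satisfies one of those two clauses, forcing $\opt\geq 1$. In particular, on a stream consisting only of single-literal clauses $(x_j)$ and $(\compl{x_j})$, we have $\opt=0$ iff no variable appears both positively and negatively.

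The reduction then is the obvious one: on an $\ind$ instance where Alice holds $a\in\{0,1\}^n$ and Bob holds an index $i\in[n]$, Alice streams the single-literal clause $(x_j)$ for every $j$ with $a_j=1$, and Bob appends the single-literal clause $(\compl{x_i})$. These clauses are distinct (Alice's are all positive literals, Bob's is negative) and none is trivially true, so the standing no-duplicate / no-trivial-clause conventions hold, and the instance uses exactly $n$ variables. The only variable that can appear in both polarities is $x_i$, which happens iff $a_i=1$; hence $\opt>0$ iff $a_i=1$. An algorithm distinguishing $\opt=0$ from $\opt>0$ with probability $2/3$ therefore solves $\ind$ with probability $2/3$ using a message of size equal to its space, so that space is $\Omega(n)$.

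I do not expect a genuine obstacle here — the content is entirely in the $\opt=0$ characterization, which is immediate, and in checking that the single-literal construction respects the paper's conventions and the one-way stream order. The only mild care needed is to phrase the characterization so that it applies verbatim to the streamed clauses; matching up the $\ind$ success probability with the stated $2/3$ in the theorem is automatic.
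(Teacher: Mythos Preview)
Your reduction is correct and proves the theorem, but it encodes the $\ind$ instance differently from the paper. The paper has Alice produce a \emph{single} clause $C_1=\bigl(\bigvee_{j:A_j=1}x_j\bigr)\vee\bigl(\bigvee_{j:A_j=0}\compl{x_j}\bigr)$ of size $n$ and Bob append the unit clause $C_2=(x_i)$; then $\opt>0$ iff $A_i=0$, since falsifying $C_1$ forces $x_i=\true$ when $A_i=0$, which satisfies $C_2$. Your version instead streams one unit clause $(x_j)$ per $1$-bit and a final unit clause $(\compl{x_i})$, so that $\opt>0$ iff $a_i=1$. Both are equally valid one-way reductions from $\ind$; yours has the minor advantage of using only unit clauses (so the lower bound already holds for $\textup{Min-}1\textup{-SAT}$), while the paper's uses just two clauses in total. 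The underlying idea---that $\opt=0$ is equivalent to no variable being forced to both values by the stream---is the same.
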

\begin{proof}
Consider an Index instance of length $n$. If $A_j=1$, Alice adds the literal $x_j$ to the clause $C_1$; otherwise, she adds the literal $\compl{x_j}$ to $C_1$. Alice then puts $C_1$ in the stream. Bob, with the index $i$, puts the clause $C_2 = (x_i)$ in the stream. If $A_i = 0$, then $\compl{x_i} \in C_1$ and ${x_i} \in C_2$ and therefore $\opt >0$. Otherwise, it is easy to see that we have an assignment that does not satisfy any of $C_1$ or $C_2$.
\end{proof}

\bibliographystyle{plain}
\small
\bibliography{ref}
\normalsize

\appendix

\section{An $F_0$-sketch based approach for \minsat}\label{app:F0-minsat}
We show that using $F_0$ sketch, we can obtain a $1+\epsilon$ approximation in $\tO(n^2/\epsilon^2)$ space. Given a vector $x\in {\mathbb R}^n$, $F_0(x)$ is defined as the number of elements of $x$ which are non-zero. Consider a subset $S\subseteq \{1, \ldots, n\}$, let  $x_S\in \{0,1\}^n$ be  characteristic vector of $S$ (i.e., $x_i=1$ iff $i\in S$). Note that $F_0(x_{S_1} + x_{S_2}+\ldots )$ is exactly the coverage $|S_1\cup S_2\cup \ldots |$. We will use the following  result for estimating $F_0$.

 \begin{theorem}[$F_0$ Sketch, \cite{ BJKST02}]\label{thm:F0-approximation}
Given a set $S\subseteq [n]$, there exists an $\tO(\epsilon^{-2}\log \delta^{-1})$-space algorithm that  constructs a data structure $\mathcal{M}(S)$ (called an \emph{$F_0$ sketch} of $S$). The sketch has the property that the number of distinct elements in a collection of sets $S_1, S_2, \ldots, S_t$ can be approximated up to a $1 + \epsilon$ factor with  probability at least $1-\delta$ provided the collection of $F_0$ sketches $\mathcal{M}(S_1), \mathcal{M}(S_2), \ldots, \mathcal{M}(S_t)$.
\end{theorem}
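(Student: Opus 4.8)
The plan is to reconstruct the classical $k$-minimum-values distinct-elements sketch of Bar-Yossef et al.\ \cite{BJKST02}. First I would fix a hash function $h:[n]\to[N]$ drawn from a pairwise-independent family with $N=\poly(n)$ large enough that, with probability $1-1/\poly(n)$, all elements of $[n]$ that ever appear receive distinct hash values; I condition on this collision-free event throughout. For a parameter $r=\Theta(\epsilon^{-2})$, the sketch $\mathcal{M}(S)$ stores the set of the $r$ smallest values of $h$ over the elements inserted into $S$. Processing an update costs one hash evaluation followed by a discard of everything but the $r$ smallest values, so the sketch occupies $O(r\log N)=\tO(\epsilon^{-2})$ bits.

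The estimator is $\widehat{F_0}=rN/v$, where $v$ is the $r$-th smallest stored value (if fewer than $r$ distinct elements have appeared we output the exact count). Heuristically, among $F_0$ distinct elements the expected number whose hash lands below a threshold $\theta N$ is $\theta F_0$, so the $r$-th smallest should sit near $rN/F_0$. I would make this precise with a two-sided Chebyshev argument: to rule out $\widehat{F_0}<(1-\epsilon)F_0$, bound the probability that more than $r$ distinct elements hash below $\tfrac{rN}{(1-\epsilon)F_0}$; to rule out $\widehat{F_0}>(1+\epsilon)F_0$, bound the probability that fewer than $r$ distinct elements hash below $\tfrac{rN}{(1+\epsilon)F_0}$. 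Each event concerns a sum of pairwise-independent indicator variables with mean $\Theta(r)$ and variance at most the mean, so Chebyshev gives failure probability $O(1/(\epsilon^2 r))$, a small constant once $r=K\epsilon^{-2}$. Taking the median of $O(\log\delta^{-1})$ independent copies amplifies the success probability to $1-\delta$, for total space $\tO(\epsilon^{-2}\log\delta^{-1})$ bits.

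For the sketch (mergeability) property I would observe that the $r$ smallest hash values among the elements of $S_1\cup\cdots\cup S_t$ are exactly the $r$ smallest values in the union of the value-sets stored by $\mathcal{M}(S_1),\ldots,\mathcal{M}(S_t)$; this merged sketch is therefore identical to the state the algorithm would reach if run on the concatenation of all $t$ streams, and applying the estimator above to it yields a $(1+\epsilon)$-approximation of $|S_1\cup\cdots\cup S_t|$ with probability $1-\delta$. The step I expect to be the main obstacle is the second-moment concentration of the $r$-th order statistic of the hash values — in particular writing the two one-sided tail bounds cleanly and folding in the $1/\poly(n)$ collision error; the choice of $N$ (and, if one wants to shave the $\log N$ factor to $O(\log\epsilon^{-1})$ per stored value, a secondary hash into a smaller range) is the other place that needs a little care, and the rest is routine bookkeeping.
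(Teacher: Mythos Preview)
The paper does not prove this theorem at all: it is stated as a black-box citation to Bar-Yossef et al.\ \cite{BJKST02} and used only as a tool in Appendix~\ref{app:F0-minsat}. There is therefore no ``paper's own proof'' to compare against.

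That said, your proposal is a faithful reconstruction of the $k$-minimum-values algorithm from \cite{BJKST02}: pairwise-independent hashing, storing the $r=\Theta(\epsilon^{-2})$ smallest hash values, the estimator $rN/v$, the two one-sided Chebyshev bounds on the rank of the $r$-th order statistic, median amplification, and the observation that the $r$ smallest hash values over a union are determined by the per-set minima. All of this is standard and correct as a sketch. The only quibble is that the collision-freeness of a pairwise-independent hash into $[N]$ with $N=\poly(n)$ does \emph{not} hold with probability $1-1/\poly(n)$ in general (a birthday-style bound with pairwise independence gives $O(n^2/N)$, so you need $N=\poly(n)$ with exponent at least~$3$ or so); this is easily fixed by choosing $N$ accordingly, and in any case the analysis of the estimator does not actually require full collision-freeness. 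For the purposes of the present paper, however, no proof is expected: the theorem is simply quoted.
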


The above immediately gives us a $1+\epsilon$ approximation in $\tO(n^2/\epsilon^2)$ space. Each literal $\ell$ corresponds to a set $S_\ell$ that contains the clauses that it is in, i.e., $S_\ell = \{ C_j : \ell \in C_j \}$. Hence, the goal is to find a combination of $\ell_1, \ell_2, \ldots, \ell_n$ where $\ell_i \in \{ x_i, \compl{x_i} \}$ such that the coverage $|C_{\ell_1} \cup \ldots \cup C_{\ell_n}|$ is minimized. We can construct $\mathcal{M}(S_{x_i})$ and $\mathcal{M}(S_{\compl{x_i}})$ for each $i=1,2,\ldots,n$ with failure probability $\delta = 1/(n 2^n)$. Then, we return the smallest estimated coverage among $2^n$ such combinations based on these sketches. This is a $1+\epsilon$ approximation w.h.p. since for all $\ell_1, \ell_2, \ldots, \ell_n$ where $\ell_i \in \{ x_i, \compl{x_i} \}$, we have an estimate $ (1\pm \epsilon)| S_{\ell_1} \cup \ldots \cup S_{\ell_n}| $ with probability $1-1/n$ by a simple union bound.

This approach's drawback is its exponential post-processing time. To see why this is hard to extend to other offline algorithms, let us review the algorithm by Kohli et al. \cite{KKM94} that yields a 2-approximation in expectation. The algorithm  goes through the variables $x_1,x_2,\ldots,x_n$ one by one. At step $i$, it processes the variable $x_i$. Let $a_i$ and $b_i$ be the number of newly satisfied clauses if we assign $x_i$ to $\true$ and $\false$ respectively. We randomly set $x_i \leftarrow  \true$ with probability $\frac{b_i}{a_i+b_i}$ and set $x_i \leftarrow \false$ with probability $\frac{a_i}{a_i + b_i}$. The algorithm updates the set of satisfied clauses and move on to the next variable. A simple induction shows that this is a 2-approximation in expectation. Note that we can run this algorithm $O(\epsilon^{-1}\log n)$ times and return the best solution to obtain a $2+\epsilon$ approximation w.h.p. 

Unfortunately, $F_0$ sketch does not support set subtraction (i.e., if we have the sketches $\mathcal{M}(A)$ and $\mathcal{M}(B)$, we are not guaranteed to get a $1\pm \epsilon$ multiplicative approximation of $|A \setminus B|$) and thus it is unclear how to compute or estimate $a_i$ and $b_i$ at each step $i$. 

Besides, better approximations to $\minksat$, for small values of $k$, have also been developed \cite{AvidorZ02,BertsimasTV99,ArifBGK20} using linear and semidefinite programming. It is not clear how to combine $F_0$ sketch with these approaches either. We now show how to sidestep the need to use $F_0$ sketch entirely and run any offline algorithm of our choice.

\section{Omitted Proofs}

\begin{proof}[Proof of Lemma \ref{lem:whp}]
Let $Z$ be the number of clauses satisfied by the algorithm. First, note that since $m/4 \leq \alpha \opt $ (since $\alpha \geq 1/2$), we have $\alpha \opt/(m-\alpha \opt) \geq 1/3$. 
We have $\expec{m-Z} \leq m-\alpha \opt$. Appealing to Markov inequality,
\begin{align*}
\prob{Z \leq (1-\epsilon) \alpha \opt} & = \prob{m-Z \geq m- (1-\epsilon) \alpha \opt}  \\
& \leq \frac{m-\alpha \opt}{m- (1-\epsilon) \alpha \opt} \\
& = \frac{1}{1 +\epsilon \alpha \frac{\opt}{(m-\alpha \opt)}} \leq   \frac{1}{1+\epsilon/3}.
\end{align*}
Hence, we can repeat the algorithm $O(\log_{1+\epsilon/3} m) = O(1/\epsilon \cdot \log m)$ times and choose the best solution to obtain an  $\alpha-\epsilon$ approximation with probability at least $1-1/\poly(m)$.
\end{proof}

\begin{proof}[Proof of Lemma \ref{lem:post-processing-2}]
In each trial $t=1,2,\ldots,Q$, each literal is set to $\true$ with probability at least $1/4$. This is because $1/4 \leq 1/4+y_i^\star/2 \leq 3/4$. Appealing to Lemma \ref{lem:large-clauses} with $\gamma = 1/4$, all assignments $A_t$ satisfy all the large clauses in $L$ with probability at least $1-1/\poly(m)$. The rest of the argument is then similar to that of Lemma \ref{lem:post-processing-1}. 
\end{proof}

\begin{proof}[Proof of Lemma \ref{lem:sampling-minsat}]
Suppose that an assignment $A$ satisfies $m_A$ clauses. Let the number of sampled clauses that $A$ satisfies be $m_A'$. For convenience, let $y = m_A$ and $y' = m_A'$. We observe that $\expec{y'} =p y$. Note that $y \geq \opt \geq z/2$. Suppose the assignment $A$ satisfies clauses $C_{\sigma_1},\ldots,C_{\sigma_y}$. We define the indicator variable $X_i = [\text{$C_{\sigma_i}$ is sampled}]$ and so $y' = \sum_{i=1}^y X_i$. Appealing to Chernoff bound, 
\begin{align*}
\prob{|y' - p y| \geq \epsilon p y} &  \leq 2  \exp{-\frac{\epsilon^2}{3} py}  \leq 2  \exp{-\frac{K \epsilon^2 n y}{z \epsilon^2}} \leq \exp{-100n}.
\end{align*}
Taking a union bound over $2^n$ distinct assignments, we have $\prob{m_A' = (1 \pm \epsilon)p  m_A }  \geq 1-  \exp{-50n}$ for all assignments $A$. We immediately have that $\opt' \leq (1+\epsilon)p \opt = O(n/\epsilon^2)$.

Suppose an assignment $\tilde{A}$ is an $\alpha$ approximation to \minsat on the sampled clauses. Let $A^\star$ be an optimal assignment on the original input clauses. From the above,  with probability at least $1-e^{-50n}$, we have $(1-\epsilon)pm_{\tilde{A}}   \leq  m_{\tilde{A}}' \leq \alpha m_{A^\star}' \leq \alpha \opt p(1+\epsilon) $.
Hence, $m_{\tilde{A}} \leq  (1+3\epsilon) \alpha \opt$. Reparameterizing $\epsilon \leftarrow \epsilon/3$ completes the proof.
\end{proof}

\end{document}